\newcommand{\C}{\mathbf{C}}
\newcommand{\G}{\mathbf{G}}
\newcommand{\HH}{\mathbf{H}}
\newcommand{\Q}{\mathbf{Q}}
\newcommand{\R}{\mathbf{R}}
\newcommand{\Z}{\mathbf{Z}}
\newcommand{\CC}{\mathcal{C}}
\newcommand{\DD}{\mathcal{D}}
\newcommand{\PP}{\mathcal{P}}
\newcommand{\GL}{\mathbf{GL}}
\newcommand{\SO}{\mathbf{SO}}
\newcommand{\SU}{\mathbf{SU}}
\newcommand{\eps}{\varepsilon}
\newcommand{\lam}{\lambda}
\newcommand{\Om}{\Omega}
\newcommand{\tu}{\tilde u}
\newcommand{\tv}{\tilde v}
\def\ei#1{e^{i#1}}
\def\eim#1{e^{-i#1}}
\def\tr{\mathop{\mathrm{tr}}}
\def\diag{\mathop{\mathrm{diag}}}
\def\spann{\mathop{\mathrm{span}}}
\newcommand{\bsk}{\bigskip}
\newcommand{\msk}{\medskip}
\newcommand{\tp}{\otimes}
\newcommand{\wt}{\widetilde}
\newcommand{\cg}[2]{\langle #1 | #2 \rangle}
\newcommand{\threej}[2]{
\ensuremath{
\begin{pmatrix} 
 #1 \\ 
 #2 
\end{pmatrix}}}
\newtheorem{prop}{Proposition}
\newtheorem{theo}{Theorem}
\newtheorem{lemma}{Lemma}
\renewenvironment{proof}{\hfill\newline\noindent\textsc{Proof.}}{\hfill $\square$ \\}
\begin{document}
\title[Molien series and invariants for $\SO(3)\wr\Z_2$]{Molien series and low-degree invariants for a natural action of $\SO(3)\wr\Z_2$}

\author[1]{D. R. J. Chillingworth$^{1,2}$, R. Lauterbach$^3$ and S. S. Turzi$^4$}

\address{$^1$ Mathematical Sciences, University of Southampton, Southampton SO17 1BJ, UK \\
$^2$ BCAM -- Basque Center for Applied Mathematics, Mazarredo 14, E48009 Bilbao, Basque Country, Spain.}
\address{$^3$ Department Mathematik, Universit\"{a}t Hamburg, Bundesstra{\ss}e 55, 20146 Hamburg, Germany.}
\address{$^4$ Dipartimento di Matematica, Politecnico di Milano, Piazza Leonardo da Vinci 32, 20133 Milano, Italy.}
%
\begin{abstract} \noindent
We investigate the invariants of the $25$-dimensional real representation of the group \mbox{$\SO(3)\wr\Z_2$} given by the left and right actions of $\SO(3)$ on $5\times 5$ matrices together with matrix transposition; the action on column vectors is the irreducible $5$-dimensional representation of $\SO(3)$.  The $25$-dimensional representation arises naturally in the study of nematic liquid crystals, where the second-rank orientational order parameters of a molecule are represented by a symmetric $3\times3$ traceless symmetric matrix, and where a rigid rotation in $\R^3$ induces a linear transformation of this space of matrices. The entropy contribution to a free energy density function in this context turns out to have \mbox{$\SO(3)\wr\Z_2$} symmetry. Although it is unrealistic to expect to describe the complete algebraic structure of the ring of invariants, we are able to calculate the Molien series giving the number of linearly independent invariants at each homogeneous degree, and to express this 
as a rational 
function indicating the degrees of invariant polynomials that constitute a basis of 19 primary invariants. The algebra of invariants up to degree 4 is investigated in detail.
\end{abstract}
%
%
\section{Introduction and motivation}
Associated to any pair of representations $D_1,D_2$ of finite or compact Lie groups $\G_1,\G_2$ on finite-dimensional complex linear spaces $V_1,V_2$ respectively there is a natural action $D_1\tp D_2$ of $\G_1\times\G_2$ on the tensor product $V_1\tp V_2$ given by
\begin{equation}
D_1\tp D_2\,(g_1,g_2): v_1\tp v_2\mapsto D_1(g_1)v_1\tp D_2(g_2)v_2.
\end{equation}
If $D_1,D_2$ are irreducible then $D_1\tp D_2$ is irreducible, and indeed every irreducible representation of $\G_1\times\G_2$ is of this form~\cite[II (4.14)]{BT}. The space $V_1\tp V_2$ can be identified with the space $L(V_2^*,V_1)$ of linear maps from $V_2^*$ to $V_1$ by
\begin{equation}
v_1\tp v_2:V_2^*\to V_1:\alpha\mapsto\alpha(v_2)v_1
\end{equation}
and the representation $D_1\tp D_2$ corresponds to the action on $L(V_2^*,V_1)$ by left and right multiplication:
\begin{equation}
D_1\tp D_2\,(g_1,g_2):A\mapsto D_1(g_1)\,A\,D_2(g_2)^*
\end{equation}
for all $A\in L(V_2^*,V_1)$, where~${}^*$ denotes dual (space or linear map).
\begin{center}
\includegraphics[width=0.4\textwidth]{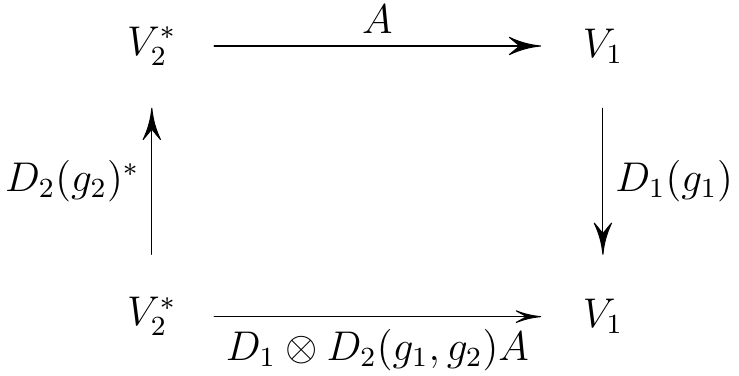}
\end{center}
If $V_2$ has an inner product and it is assumed (as may always be done) that $D_2$ acts by unitary transformations, then $V_2^*$ is identified with $V_2$ and $D_2(g_2)^*$ becomes the complex conjugate transposition of $D_2(g_2)$.
\subsection{Wreath product actions}  \label{ss:wreath}
In the particular case where $V_1=V_2=V$, $\G_1=\G_2=\G$ and $D_1=D_2=D$ then there is an extra symmetry inherent in the situation, namely the $\Z_2$-action of~${}^*$ on $L(V^*,V)$.  Combined with $D\tp D$ this gives an action 
$\wt D$ of the {\it  wreath product}
\begin{equation}
\mathbf{\Gamma}:=\G\wr\Z_2 \cong (\G\times\G) \rtimes \Z_2
\end{equation}
on $L(V^*,V)$.  Explicitly, each element of $\mathbf{\Gamma}$ can be written uniquely in the form $(g,h)\tau^\eps$ with $\eps\in\{0,1\}$, those elements with $\eps=0$ forming a subgroup $\mathbf{\Gamma}_0<\mathbf{\Gamma}$ of index~$2$ isomorphic to $\G\times \G$.  Elements of $\mathbf{\Gamma}_0$ act on $L(V^*,V)$ by 
\begin{equation}
\wt D(g,h): A\mapsto D(g)\,A\,D(h)^*
\end{equation}
while the elements of the other coset $\mathbf{\Gamma}_1$ act by 
\begin{equation}
\wt D(g,h)\tau: A\mapsto D(g)\,A^*\,D(h)^*.
\end{equation}
In this paper we focus on the case when $\G=\SO(3)$, and are interested in studying real-valued {\it invariants} for real wreath product actions as just described. 
\msk

It is well known that the irreducible representations of $\SO(3)$ are each characterised by a nonnegative integer $\ell$, there being (up to isomorphism) precisely one such representation $D^{(\ell)}$ for each given $\ell$ on a linear space $V^{(\ell)}$ of dimension $2\ell+1$.
Group actions of the wreath product type arise in certain physical applications such as the study of rotating ellipsoids~\cite{RS} (where $\ell=1$) or phase transitions of biaxial nematic liquid crystals~\cite{CD} (where $\ell=2$).  Invariants of these actions arise naturally from the physics when functions generated by microscopic considerations are averaged over the group orbits to give frame-independent macroscopic observables.
For example, in the molecular field theory for biaxial nematic liquid crystals as formulated in~\cite{LN} a fundamental role is played by a family of probability distributions $f$ on $\SO(3)$ of the form
\begin{equation}
f(\eta,\Om)= const.\times\exp \left(\eta\cdot Q(\Om)\right),
\label{eq:pdf}
\end{equation}
where $\Om\in\SO(3)$ and $Q(\Om)=D(\Om)\,Q$ with $D$ a real representation of  $\SO(3)$ on a space $V\cong\R^n$ of orientational order parameters $Q$ \cite{2011ST}.  Here $\eta\in L(V,V)$ plays the role of a set of Lagrange multipliers and the dot denotes the inner product $A\cdot B=\mathrm{trace}(A^*B)$ on $L(V,V)$.  The {\it entropy difference} function
$\Delta S:L(V,V)\to\R$ given by
\begin{equation}
\Delta S(\eta)=-k \int_{\SO(3)}f(\eta,\Om)\ln f(\eta,\Om) d\Om
\end{equation}
plays a central part in the study of phase transitions~\cite{LN} and is invariant under the action of $\SO(3)\wr\Z_2$ on $L(V,V)$ associated to $D$: this can be seen from the properties of the inner product and the fact that integrating over $R\in\SO(3)$ is the same as integrating over $R_0R$ (for fixed $R_0\in\SO(3)$) or over $R^T=R^{-1}$.  See~\cite{CD} for further details.
\msk 

To display the algebraic structure of the ring of invariant functions for group actions of this kind may be feasible (although complicated) for groups and representations of low dimension, as has been carried out for example in~\cite{KW} for a related case of $\G=\GL(2)$.
However, it is totally impractical in our case where $\SO(3)\wr \Z_2$ acts on the $25$-dimensional space $L(V,V)$ with $\dim V=5\,$: as we show below, there is an integrity basis for the ring of invariants having~19 primary invariants (to be expected, since $\dim\mathbf{\Gamma}=6$)  and 726,963,024 secondary invariants.  Nevertheless we are able to derive the Molien series that gives the number of linearly independent invariants at each homogeneous degree, both for the subgroup $\SO(3)\times\SO(3)$ and for the full group $\SO(3)\wr \Z_2$, and to display a rational Molien function that indicates at least some of the further structure at low degree.  In particular we give explicit generators at degrees $2,3$ and $4$ and in doing so we uncover some unexpected 
algebraic properties of tensor contractions that yield these generators. 
\subsection{Quaternions, $\SO(4)$ and $\SO(3)\times\SO(3)$}
The group $\SU(2)$ is isomorphic to the group $\Q^*$ of unit quaternions, and acts orthogonally on the space $\Q\cong\R^4$ of all quaternions by
\begin{equation}
q: x\mapsto qh(x)\bar q
\end{equation}
where we identify $x=(x_0,x_1,x_2,x_3)^T\in\R^4$ with
\begin{equation}
h(x)=x_0+x_1 i+x_2 j+ x_3k\in\Q.
\end{equation}
This action leaves the real subspace invariant and so induces an action on its orthogonal complement (pure quaternions) $\cong\R^3$, which defines a projection
\begin{equation}
\pi:\SU(2)\to \SO(3).
\end{equation}
It is straightforward to check that the kernel of $\pi$ is $\{E,-E\}$ where $E$ denotes the identity matrix in $\SU(2)$, so that the group $\SU(2)$ is a double cover of $\SO(3)$. Thus every representation of $\SO(3)$ induces one of $\SU(2)$, while the converse is true only if $-E$ acts trivially. 
\msk

The irreducible complex representations of $\SU(2)$ can be explicitly constructed on complex homogeneous polynomials in two variables of degree $d\ge 0$ (see e.g.~\cite[II\thinspace(5.3)]{BT}). Since the space of such polynomials has complex dimension $d+1$, we have such a representation for each positive integer dimension, and clearly the element $-E$ acts trivially if and only if $d$ is even.
Thus it is the odd-dimensional irreducible (complex) representations of $\SU(2)$ that give rise to the irreducible complex representations of $\SO(3)$ which, since these are of real type~\cite[II\thinspace6]{BT}, correspond to representations in each odd (real) dimension $2\ell+1$.  With this convention, the even-dimensional irreducible complex representations of $\SU(2)$ can be labelled by non-integer values of~$\ell$ which we call {\it half-integers}. 
\msk

Let 
\begin{equation}
{\rm Spin}_4=\Q^*\times \Q^*.
\end{equation}
There is a natural action of ${\rm Spin}_4$  on $\Q$ given by
\begin{equation}
(q_1,q_2):q\mapsto q_1q\bar q_2.
\end{equation}
Viewed as an action on $\R^4$, this is orthogonal and has determinant~$1$,
and thus defines a homomorphism
\begin{equation}
\tilde\pi:{\rm Spin}_4\to\SO(4).
\end{equation}
The kernel is again $\Z_2$, here generated by $(-E,-E)$ and so we have an isomorphism
\begin{equation}
{\rm Spin}_4/\Z_2=\Q^*\times \Q^*/\Z_2 \to \SO(4).
\end{equation}
At the same time we have an isomorphism
\begin{equation}
\SU(2)\times\SU(2)\,/\,\Z_2\times \Z_2 \to \SO(3)\times\SO(3),
\end{equation}
and so it is natural to ask: 
when does an $(\ell_1,\ell_2)$ representation of $\SU(2)\times\SU(2)$ correspond to a representation of $\SO(4)$? This is the case if and only if both numbers $\ell_1$, $\ell_2$ are half-integers or both are integers, since $(-E,-E)$ must act trivially and therefore $-E$ must act in the same way (change the sign or not change the sign) in both factors.  To summarise:
\begin{prop}
The tensor product of two (complex) irreducible representations of $\SU(2)$ with labels $\ell_1,\ell_2$ projects to a representation of $\SO(4)$
(of dimension $(2\ell_1+1)(2\ell_2+1)$) if and only if $\ell_1+\ell_2$ is an integer.  In particular, the tensor product of two (complex, real) irreducible representations of $\SO(3)$ always projects to a representation of $\SO(4)$.
\end{prop}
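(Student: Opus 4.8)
The plan is to reduce the statement to a single scalar computation. Since the excerpt establishes the isomorphism ${\rm Spin}_4/\Z_2 \cong \SO(4)$ with kernel $\Z_2$ generated by $(-E,-E)$, a representation of ${\rm Spin}_4 = \Q^*\times\Q^*$ descends to $\SO(4)$ precisely when this generator $(-E,-E)$ lies in the kernel of the representation, i.e.\ acts as the identity. So the whole proposition amounts to determining the scalar by which $(-E,-E)$ acts on the tensor product $D^{(\ell_1)}\otimes D^{(\ell_2)}$.

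First I would compute how the central element $-E$ acts in a single irreducible factor $D^{(\ell)}$. Using the polynomial model recalled above, $D^{(\ell)}$ is realised on homogeneous polynomials in two variables of degree $d=2\ell$, and $-E = \diag(-1,-1)$ sends each monomial $z_1^a z_2^b$ with $a+b=d$ to $(-1)^d z_1^a z_2^b$. Hence $-E$ acts as the scalar $(-1)^{2\ell}$, equal to $+1$ when $\ell$ is an integer and $-1$ when $\ell$ is a half-integer, consistent with the earlier remark that $-E$ acts trivially exactly on the integer (i.e.\ $\SO(3)$) representations.

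On the tensor product the generator $(-E,-E)$ therefore acts as the product of these two scalars, $(-1)^{2\ell_1}(-1)^{2\ell_2} = (-1)^{2(\ell_1+\ell_2)}$. This equals $+1$ if and only if $2(\ell_1+\ell_2)$ is even, i.e.\ if and only if $\ell_1+\ell_2$ is an integer, which gives the claimed equivalence; the dimension count is immediate since $\dim D^{(\ell_i)} = 2\ell_i+1$. For the final assertion, two representations of $\SO(3)$ correspond to integer labels $\ell_1,\ell_2$, whose sum is automatically an integer, so their tensor product always descends to $\SO(4)$.

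There is no serious computational obstacle here; the one point that requires care, and that I would stress in the write-up, is the difference between the two quotients appearing in the excerpt. Descending to $\SO(3)\times\SO(3) = \SU(2)\times\SU(2)/(\Z_2\times \Z_2)$ would force \emph{each} copy of $-E$ to act trivially, hence both $\ell_i$ to be integers; descending to $\SO(4) = \Q^*\times\Q^*/\Z_2$ only asks that the \emph{diagonal} element $(-E,-E)$ act trivially, which is the strictly weaker condition $\ell_1+\ell_2\in\Z$. Keeping the two $\Z_2$'s distinct is exactly what makes the case of two half-integer labels admissible for $\SO(4)$.
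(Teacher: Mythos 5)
Your proposal is correct and follows essentially the same route as the paper: the paper likewise reduces the question to whether the generator $(-E,-E)$ of the kernel of ${\rm Spin}_4\to\SO(4)$ acts trivially, using the fact (from the polynomial model) that $-E$ acts on $D^{(\ell)}$ by the sign $(-1)^{2\ell}$, so triviality holds exactly when $\ell_1,\ell_2$ are both integers or both half-integers, i.e.\ $\ell_1+\ell_2\in\Z$. Your closing remark distinguishing the quotient by the diagonal $\Z_2$ from the quotient by $\Z_2\times\Z_2$ is exactly the point the paper's argument turns on.
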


In the following table we list the labels for irreducible representations of $\SU(2)\times\SU(2)$ for dimensions $(2\ell_1+1)(2\ell_2+1)$ up to 33.
 \begin{equation}
 \begin{array}{cc|c|cc}
   \dim &(\ell_1,\ell_2)&&\dim&(\ell_1,\ell_2)\\
\hline
1&(0,0)&&19&(0,9),(9,0)\\
3&(0,1), (1,0)&&20&(\frac12,\frac92)\\
4&(\frac12,\frac12)&&21&(0,5),(1,3),(3,1),(5,0)\\
5&(0,2),(2,0)&&23&(0,11),(11,0)\\
7&(0,3),(3,0)&&24&(\frac12,\frac{11}{2}),(\frac32,\frac52)\\
8&(\frac12,\frac32)&&25&(0,12),(2,2),12,0)\\
9&(0,4),(1,1),(4,0)&&27&(1,4),(4,1)\\
11&(0,5),(5,0)&&28&(\frac12,\frac{13}{2})\\
12&(\frac12,\frac52)&&29&(0,14),(14,0)\\
13&(0,6),(6,0)&&30&(\frac12,\frac{15}2)\\
15&(0,7),(1,2),(7,0)&&31&(0,15),(15,0)\\
16&(\frac12,\frac72),(\frac32,\frac32)&&32&(\frac12,\frac{15}2),(\frac32,\frac72)\\
17&(0,8),(8,0)&&33&(0,16)(16,0)\\
 \end{array}
\end{equation}
Our interest in this paper is in the cases where $(\ell_1,\ell_2)$ are positive integers, and $\ell_1=\ell_2=\ell$ so that conjugate transposition
also acts on the tensor product representation.  For practical purposes this restricts us to $\ell=1$ or $\ell=2$. For further detailed information on subgroups of $\SO(4)$, quaternions and rotations see \cite{DV,Conway}.
\section{Molien functions}
Associated to any $n$-dimensional (real or complex) representation $D$ of a compact Lie group $\HH$ is its
{\it Molien integral}
\begin{equation}  \label{e:moliendef}
\PP_\HH=\int_\HH\det(I-tD(h))^{-1}dh   
\end{equation}
with respect to the normalised (Haar) invariant measure on $\HH$.  The integral has the remarkable property that when formally expanded as a power series in $t$ the coefficient of $t^m$ is a non-negative integer equal to the dimension of the linear space of $D$-invariant polynomials of homogeneous degree~$m$.  Moreover, when appropriately expressed as a rational function of~$t$, the Molien series conveys further information on the structure of the ring of invariants: see e.g.~\cite{MZ,ST}.  Since $\det(I-tD(g))=t^n\det(t^{-1}I-D(g))$ the integrand in~(\ref{e:moliendef}) can always be expressed in terms of the eigenvalues of $D(g)$ when these are known. In the case of a unitary representation the Molien series is automatically real.
\msk

We apply this to the case of $\HH={\mathbf\Gamma}=\G\wr\Z_2$ acting on $L(V,V)$ via $\wt D$ as in Section~\ref{ss:wreath}.  
First, we extend the action of $\HH$ to its associated action on the complex linear space $V\tp\C$ in order to be able to use a basis of (complex) eigenvectors for each element of $D(\HH)$. For $g\in\G$ let $\Lambda(g)\subset\C$ denote the corresponding set of eigenvalues $\{\lambda_1,\ldots,\lambda_n\}$ of $D(g)$ (with repeats) where $\dim V=n$.
\begin{prop} \label{p:eigen} 
Let $(M,N)\in {\mathbf\Gamma}_0\cong \G\times \G$.
\begin{enumerate}
\item The eigenvalues $\nu$ of $\wt D(g,h)$ are given by $\nu=\lam_i\bar\mu_j$ 
where $\lam_i,\mu_j\in\Lambda(g),\Lambda(h)$ respectively.
\item The eigenvalues $\gamma$ of $\wt D((g,h)\tau)$ are given by $\gamma^2=\bar\beta_i\bar\beta_j$ and by $\gamma=\bar\beta_i$
where $\beta_i,\beta_j\in\Lambda(gh)$.
\end{enumerate}
\end{prop}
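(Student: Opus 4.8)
The plan is to complexify $V$ and identify $L(V,V)\cong V\otimes V$ through the rank-one correspondence $v_1\otimes v_2\mapsto v_1v_2^{\mathrm T}$, under which matrix transposition becomes the swap involution $\sigma:v_1\otimes v_2\mapsto v_2\otimes v_1$. Since $D$ is real orthogonal, the dual $D(h)^*$ is the transpose $D(h)^{\mathrm T}$, and the $\C$-linear extension of the (real) operator keeps transposition as the ordinary transpose. A direct check on rank-one matrices then shows that $\wt D(g,h)$ acts as the tensor-product operator $D(g)\otimes D(h)$, while $\wt D((g,h)\tau)$ acts as $S:=(D(g)\otimes D(h))\,\sigma$. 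Throughout I shall use that the spectrum $\Lambda(\cdot)$ of any real orthogonal matrix is closed under complex conjugation, which is what reconciles the bars in the statement.

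The first part then reduces to a standard fact: choosing eigenbases $D(g)e_i=\lam_i e_i$ and $D(h)f_j=\mu_j f_j$, the vectors $e_i\otimes f_j$ form an eigenbasis of $D(g)\otimes D(h)$ with eigenvalues $\lam_i\mu_j$. Since $\{\mu_j\}=\{\bar\mu_j\}$ as multisets, this is exactly the asserted list $\nu=\lam_i\bar\mu_j$.

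For the second part the obstacle is that the transposition destroys the tensor-product structure, so $S$ is not diagonalised by product vectors and its eigenvalues cannot be read off directly. The key is to pass to the square: using $\sigma(X\otimes Y)\sigma=Y\otimes X$ and the fact that $D$ is a homomorphism, I compute
\[
S^2=(D(g)\otimes D(h))\,(D(h)\otimes D(g))=D(gh)\otimes D(hg).
\]
Because $hg=g^{-1}(gh)g$, the normal matrices $D(gh)$ and $D(hg)$ share the spectrum $\Lambda(gh)$; fixing an eigenbasis $D(gh)w_i=\beta_i w_i$ and setting $w_i':=D(g)^{-1}w_i$, one obtains $D(hg)w_i'=\beta_i w_i'$ together with the two relations $D(g)w_i'=w_i$ and $D(h)w_i=\beta_i w_i'$. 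These let me evaluate $S$ on product vectors explicitly, namely $S(w_i\otimes w_j')=\beta_i\,(w_j\otimes w_i')$.

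Hence $S$ merely permutes the basis $\{w_i\otimes w_j'\}$ (up to scalars) by interchanging the two indices: it fixes each diagonal line $\spann\{w_i\otimes w_i'\}$, on which it acts as the scalar $\beta_i$, and it preserves each plane $\spann\{w_i\otimes w_j',\,w_j\otimes w_i'\}$ with $i\ne j$, on which its matrix is $\left(\begin{smallmatrix}0&\beta_j\\\beta_i&0\end{smallmatrix}\right)$. The diagonal lines contribute the eigenvalues $\gamma=\beta_i$, while each off-diagonal plane contributes the two roots of $\gamma^2=\beta_i\beta_j$. The tally $n+2\binom n2=n^2$ confirms that these exhaust the spectrum of $S$, and invoking conjugation-closedness of $\Lambda(gh)$ rewrites them as $\gamma=\bar\beta_i$ and $\gamma^2=\bar\beta_i\bar\beta_j$, as claimed. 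The one point demanding care is precisely this bookkeeping of the square roots --- distinguishing the diagonal case (a single sign, $\gamma=\beta_i$) from the off-diagonal case (both signs) --- since a naive square root of the spectrum of $S^2$ would overcount.
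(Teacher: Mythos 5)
Your proof is correct and follows essentially the same route as the paper's: under the identification $L(V,V)\cong V\otimes V$, your invariant planes $\spann\{w_i\otimes w_j',\,w_j\otimes w_i'\}$ with block $\left(\begin{smallmatrix}0&\beta_j\\ \beta_i&0\end{smallmatrix}\right)$ are exactly the paper's construction $\spann\{\tu v^*,\tv u^*\}$ built from paired eigenvectors of $gh$ and $hg$, with the diagonal vectors supplying the eigenvalues $\gamma=\bar\beta_i$. Your write-up is in fact tidier on two points the paper leaves implicit: the count $n+2\binom{n}{2}=n^2$ showing the spectrum is exhausted (the paper only exhibits the planes for distinct eigenvalues), and the explicit appeal to conjugation-closedness of $\Lambda(gh)$ to reconcile the $\C$-linear transpose with the bars in the statement.
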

\proof\quad  For simplicity we take $D$ as understood and drop it from the notation.
\begin{enumerate}
\item If $\{u_i\}$,$\{v_j\}$ are (complex) eigenbases for $g,h$ respectively ($i,j=1,\dots,n$) then the $n^2$ matrices $\{u_iv_j^*\}$ clearly form an eigenbasis for $(g,h)$ with eigenvalues $\nu$ as stated.
\item First note that the eigenvalues of $gh$ coincide with those of $hg$
since $w$ is an eigenvector of $gh$ if and only if $hw$ is an eigenvector of $hg$ with the same eigenvalue. Next, let $u,v$ be (complex) eigenvectors of $hg$ with distinct eigenvalues $\lam,\mu$ respectively. Write $\tu=gu$ and $\tv=gv$, so that $h\tu=\lam u$ and $h\tv=\mu v$.
We then see that
\begin{equation}
(g,h)\tau\,(\tu v^*)=gv\tu^*h^*=\bar\lam\tv u^*
\end{equation}
and
\begin{equation}
(g,h)\tau\,(\tv u^*)=gu\tv^*h^*=\bar\mu\tu v^*.
\end{equation}
Therefore the $2$-dimensional subspace $\spann\{\tu v^*,\tv u^*\}$ of $L(V\tp\C)$ is invariant under $(g,h)\tau$, and with respect to the basis $\{\tu v^*,\tv u^*\}$ the matrix of $(g,h)\tau$ is
\begin{equation}
\begin{pmatrix}
0 & \bar{\mu} \\ \bar{\lam} & 0
\end{pmatrix}
\end{equation}
whose eigenvalues $\gamma$ satisfy $\gamma^2=\bar\lam\bar\mu$ as claimed. We observe also that $\tu u^*$ is an eigenvector of $(g,h)\tau$ with eigenvalue $\bar\lam$. \endproof
\end{enumerate}
\msk

We apply this now to the case of $\G=\SO(3)$ acting on $V=V^{(\ell)}$ by the $D^{(\ell)}$ representation where $\ell=1,2$. The case $\ell=1$ is straightforward and doubtless well known in the invariant theory literature, but we include it here as illustration of the methods employed in the considerably more complicated $\ell=2$ case. Using Proposition~\ref{p:eigen} it is straightforward to write down the Molien integrand in~(\ref{e:moliendef}). The eigenvalues of $M\in \SO(3)$ are $\{1,\ei\theta,\eim\theta\}$
for some $\theta$ which we call the {\it rotation angle}.  Let $N\in \SO(3)$ with rotation angle
$\phi$, and suppose $MN$ has rotation angle $\psi$. Write (with new notation) $z=\ei\theta,w=\ei\phi$ and $v=\ei\psi$.
\subsection{The case $\ell=1$}
The eigenvalues of $D^{(1)}(M)$,$D^{(1)}(N)$ are those of $M,N$ and so by Proposition~\ref{p:eigen} the eigenvalues of $\wt D^{(1)}(M,N)$ are 
\begin{equation}
\{z^jw^k\}=\{\ei{(j\theta+k\phi)}\},\quad j,k\,\in\{-1,0,1\} 
\end{equation}
while the eigenvalues of $\wt D^{(1)}((M,N)\tau)$ are 
\begin{equation}
\{1,1,-1,\ei\psi,\eim\psi,\pm\ei{\psi/2},\pm\eim{\psi/2}\}.
\end{equation}
Therefore for $g=(M,N)\in {\mathbf\Gamma}_0$ we have
\begin{equation}
\det(I-t\wt D^{(1)}(g))=\prod_{j=-1}^1 \prod_{k=-1}^1 (1-tz^jw^k)
\end{equation}
while for $g=(M,N)\tau\in {\mathbf\Gamma}_1$
\begin{equation} \label{e:D1tau}
\hspace{-2cm}
\det(I-t\wt D^{(1)}(g))=(1-t)^2(1+t)(1-tv)(1-t\bar v)(1-t^2v)(1-t^2\bar v).
\end{equation}
\subsection{The case $\ell=2$}
The eigenvalues of $D^{(2)}(M)$ are $\{1,\ei\theta,\eim\theta,e^{2i\theta},e^{-2i\theta}\}$ 
and so the eigenvalues of $\wt D^{(2)}(M,N)$ are 
\begin{equation}
\{z^jw^k\}=\{\ei{(j\theta+k\phi)}\},\quad j,k\,\in\{-2,-1,0,1,2\} \notag
\end{equation}
while the eigenvalues of $\wt D^{(2)}((M,N)\tau)$ are 
\begin{align}
     1  &\qquad three\ times  \notag \\
    -1  &\qquad twice \notag \\
\pm e^{\pm i\psi/2} &\qquad  twice\ each  \notag\\
e^{\pm i\psi} &\qquad twice\ each  \notag\\
-e^{\pm i\psi} &\qquad once\ each  \notag\\
e^{\pm 2i\psi} &\qquad once\ each  \notag\\
\pm e^{\pm 3i\psi/2} &\qquad  once\ each \notag
\end{align}
(a total of $25$ eigenvalues). Therefore for $g\in {\mathbf\Gamma}_0$
\begin{equation}
\det(I-t\wt D^{(2)}(g))=\prod_{j=-2}^2 \prod_{k=-2}^2 (1-tz^jw^k)
\end{equation}
while for $g\in {\mathbf\Gamma}_1$ the expression becomes
\begin{align}
\det(I-t\wt D^{(2)}(g))&=(1-t)^3(1+t)^2(1-t^2v)^2(1-t^2\bar v)^2(1-tv)^2
      (1-t\bar v)^2\times  \notag \\ 
 &\qquad\times (1+tv)(1+t\bar v)(1-tv^2)(1-t\bar v^2)(1-t^2v^3)(1-t^2\bar v^3).\label{e:D2tau}
\end{align}
\subsection{Evaluation of the Molien integrals}
When the integration~(\ref{e:moliendef}) is carried out over the Lie group ${\mathbf\Gamma}=\G\wr\Z_2$ the result is half of the sum of integrals over the subgroup ${\mathbf\Gamma}_0$ and the coset ${\mathbf\Gamma}_1$:
\begin{equation}  \label{e:int+int}
\int_{\mathbf\Gamma}= \frac12\Bigl(\,\int_{{\mathbf\Gamma}_0} + 
\int_{{\mathbf\Gamma}_1}\Bigr)
\end{equation}
since the normalised Haar measure for ${\mathbf\Gamma}$ when restricted to ${\mathbf\Gamma}_0$ is half the Haar measure of ${\mathbf\Gamma}_0$. As described in~\cite[IV\thinspace(1.11)]{BT} for example, the Weyl Integral Formula shows that an integral over a compact Lie group can be decomposed into a double integral over a maximal torus and over the quotient of the group by this torus.  In the case when the integrand is a class function (invariant under conjugation in the group) then the latter integral becomes trivial, and the total integral reduces to an integral over the maximal torus at the cost of introducing a further term into the integrand that we call the {\it Weyl factor}, and dividing by the order $|W|$ of the {\it Weyl group} $W$. The integral over the maximal torus is expressed in terms of angular variables and is evaluated using residue calculus. Since the aim is to find $\PP_{\mathbf\Gamma}(t)$ as an infinite series in $t$ we can regard $t$ as a small real variable and in any case with $|t|<1$,
 and this 
restriction can be usefully exploited when evaluating residues. 
\msk

For the case of $\G=\SO(3)$ with which we are concerned, a maximal torus (circle) in $\G$ is given by the matrices
\begin{equation}
\begin{pmatrix}
 1 & 0 & 0 \\ 0 & \cos\theta & -\sin\theta \\  0 & \sin\theta & \cos\theta
\end{pmatrix}
\end{equation}
for $0\le\theta<2\pi$, and integration is over the unit circle regarded as the circle $|z|=1$ in the complex plane. The Weyl factor can be found to be 
\begin{equation}
\tfrac12(1-z)(1-\bar z)= (1-\cos\theta)
\end{equation}
(cf.~\cite{CL,LS}) while $W=\Z_2$ so $|W|=2$ and therefore the Molien integral~\eqref{e:moliendef} for the $\wt D^{(\ell)}$ representation of ${\mathbf\Gamma}_0$ becomes
\begin{equation}  \label{e:integrand0}
\PP_{{\mathbf\Gamma}_0}^\ell(t)=-\frac1{16\pi^2} \int_{|z|=1}\int_{|w|=1}
    \frac{(1-z)^2(1-w)^2}{\prod_{j,k=-\ell}^\ell(1-tz^jw^k)}\frac{dz}{z^2}\frac{dw}{w^2}
\end{equation}
for $\ell=1$ or $\ell=2$, with $4\pi^2$ the normalising factor of the invariant measure on the maximal torus $S^1\times S^1$ of ${\mathbf\Gamma}_0$.
\msk

For the second coset ${\mathbf\Gamma_1}$ of ${\mathbf\Gamma}$ in the case $\ell=1$ we have from~\eqref{e:D1tau} (with slight abuse of notation as ${\mathbf\Gamma}_1$ is not a group)
\begin{align}  
\PP_{{\mathbf\Gamma}_1}^1(t)& =\frac1{4\pi i}(1-t)^{-2}(1+t)^{-1} \times \notag \\
&\times \int_{|v|=1}\frac{(1-v)(1-v^{-1})}{(1-tv)(1-tv^{-1})(1-t^2v)(1-t^2v^{-1})}\frac{dv}v
\label{e:G1case1}
\end{align}
whereas for $\ell=2$ we have the slightly more complicated expression
\begin{equation} \label{e:G1case2}
\PP_{{\mathbf\Gamma}_1}^2(t)=\frac1{4\pi i}(1-t)^{-3}(1+t)^{-2}\int_{|v|=1}
\frac{(1-v)(1-v^{-1})}{K(v,t)}\frac{dv}v
\end{equation}
where from~\eqref{e:D2tau}
\begin{align}  \label{e:stardenom}
K(v,t)&=(1-tv)^2(1-t\bar v)^2(1+tv)(1+t\bar v)(1-tv^2)(1-t\bar v^2)\times \notag \\
    &\qquad\times (1-t^2v)^2(1-t^2\bar v)^2(1-t^2v^3)(1-t^2\bar v^3)\notag\\
    &=v^{-10}(1-tv)(1-tv^2)(1-t^2v)^2 (1-t^2v^2)(1-t^2v^3) \times \notag \\
    &\qquad\times (v-t)(v^2-t)(v^2-t^2)(v-t^2)^2(v^3-t^2).
\end{align}
Note that for $\mathbf\Gamma_1$ the calculation of the Molien integral \eqref{e:moliendef} reduces to a single integral since by~\eqref{e:D2tau} the expression \mbox{$\det\big[I-tD\big((M,N)\tau\big)\big]$} depends only the rotation angle $\psi$ of the product $MN$. More precisely, if $f(M,N)=\tilde{f}(L)$ with $L=MN$ then
\begin{equation} 
\int_{G\times G} f(M,N) d(M,N) 
= \int_{G} dN \int_{G}\tilde{f}(MN) dM = \int_{G}\tilde{f}(L) dL, 
\end{equation}
where we have used the facts that the measure is normalised and $G$-invariant. 
\subsection{Integration in the case $\ell=1$}
Since $|w|=1$ and $|t|<1$ the only terms in the denominator that give rise to nonzero residues
in the $z$-integral in $\PP_{{\mathbf\Gamma}_0}^1(t)$ are those with $j=-1$, that is $z=tw^k$ for $k=-1,0,1$, as well as $z=0$.  Thus 
\begin{equation} \label{e:wint1}
\PP_{{\mathbf\Gamma}_0}^1(t)=\frac1{8\pi i}\int_{|w|=1} \bigl(R_{-1}(w)+R_0(w)+R_1(w)\bigr)dw ,
\end{equation}
where $R_k(w)$ denotes the residue at $z=tw^k$ of the function $K(w)F_1(z,w)$ where
\begin{equation}
K(w)=\frac{(1-w)^2w^{-2}}{(1-tw^{-1})(1-t)(1-tw)}
\end{equation}
and
\begin{equation}
\hspace{-2.5cm}
F_1(z,w):=\frac{(1-z)^2z^{-2}}{(1-tz^{-1}w^{-1})(1-tz^{-1})(1-tz^{-1}w)(1-tzw^{-1})(1-tz)(1-tzw)}
\end{equation}
as it easy to verify that the residue at $z=0$ is zero.  It is straightforward (although a little tedious and best automated) to calculate that $R_k(w)=wS_k(w)^{-1}$ ($k=-1,0,1$)
where
\begin{align}
S_{-1}(w)&=t(t+1)(1-t)^2(w+1)(tw-1)(t^2-w)(w+t)      \\
S_0(w)&=t(t+1)(w-t^2)(w-t)(1-tw)(t^2w-1)             \\
S_1(w)&=t(t+1)(1-t)^2(w+1)(tw+1)(1-t^2w)(w-t).               
\end{align}
We thus see that the residues of~(\ref{e:wint1}) inside the unit circle occur where $w=\pm t,t^2$ and further calculation gives that
\begin{equation} \label{e:D1G0fn}
\hspace{-2cm}
\PP_{{\mathbf\Gamma}_0}^1(t)=\bigl((1-t)^3(t^2+1)(t^2+t+1)(t+1)^2\bigr)^{-1}
            =\bigl((1-t^2)(1-t^3)(1-t^4)\bigr)^{-1}
\end{equation}
which when expanded as a power series in $t$ gives
\begin{equation} \label{e:D1G0series}
\hspace{-2cm}
\PP_{{\mathbf\Gamma}_0}^1(t)=1+t^2+t^3+2t^4+t^5+3t^6+2t^7+4t^8+3t^9+5t^{10}+4t^{11}+7t^{12}+O(t^{13}).
\end{equation} 
However, the series~\eqref{e:D1G0series} is less informative than the rational function expression~\eqref{e:D1G0fn} which shows that there are three primary invariants of degrees $2,3,4$ and no secondary invariants.
Candidates for primary invariants are
\begin{equation}  \label{e:3invs}
\tr(A^TA),\quad \det(A),\quad \tr(A^TAA^TA).
\end{equation}
Since by the Cayley-Hamilton Theorem every $3\times3$ matrix $B$ satisfies its characteristic equation
\begin{equation}
B^3-\tr(B)\,B^2 + \tfrac12(\tr(B)^2-\tr(B^2))B+\det(B)\,I = 0,
\end{equation}
applying this to $B=A^TA$ shows how \lq natural' invariants of higher degree
such as $\tr(A^TAA^TAA^TA)$ may be expressed in terms of the primary generators~(\ref{e:3invs}).
\msk

The integral~\eqref{e:G1case1} for $\PP_{{\mathbf\Gamma}_1}^1(t)$ is simpler to evaluate.  The residues inside the unit circle are given by $v=t,t^2$ with a zero residue at $v=0$, and we find
\begin{equation}
\PP_{{\mathbf\Gamma}_1}^1(t)=\bigl((1-t^2)(1-t^3)(1-t^4)\bigr)^{-1}
\end{equation}
which, remarkably, is the same as $\PP_{{\mathbf\Gamma}_0}^1(t)$.  How can the Molien series
\begin{equation}
\PP_{{\mathbf\Gamma}}^1(t)
    =\frac12\bigl(\PP_{{\mathbf\Gamma}_0}^1(t)+\PP_{{\mathbf\Gamma}_1}^1(t)\bigr)  
\end{equation}
be the same as that of the proper subgroup $\mathbf\Gamma_0$?  This apparent anomaly is resolved by the observation in \cite{RS} that polar decomposition (every matrix can be written as the product of an orthogonal and a symmetric matrix) implies that $A^T$ lies on the same ${\mathbf\Gamma}_0$-orbit as $A$, and so every invariant for ${\mathbf\Gamma}_0$ is an invariant for $\tau$  and hence for ${\mathbf\Gamma}_0\cup{\mathbf\Gamma}_1$.  This simplification does not occur in the case $\ell=2$, however, because not every element of $\SO(5)$ is of the form $D^{(2)}(M)$ with $M\in \SO(3)$.
\subsection{Integration in the case $\ell=2$}
The calculations here become considerably more complicated, and hardly feasible to carry out by hand.  We have used MAPLE to assist with the residue calculations and the subsequent algebra.
\subsubsection{Residues of the $z$-singularities}
The terms in the denominator of the integrand~(\ref{e:integrand0}) for $\PP_{{\mathbf\Gamma}_0}^2(t)$ that contribute to residues for the $z$-integral are those with $j=-1,-2$. For $k=-2,\ldots,2$ we write $R_{1,k}(w)$ to denote the $z$-residue that occurs at $z=tw^k$ and $R_{2,k}(w)$ for the sum of the residues that occur at $z=\pm\sqrt{tw^k}$, and obtain the following results, most conveniently expressed in terms of the {\it inverses} of the $R_{1,k}$ and $R_{2,k}$:
\msk

\fbox{n=1}  
\msk

\begin{align}
-w^{29}t^4R_{1,-2}^{-1}&=T_{1,-2}(w,t)\,Q_{1,-2}(w,t) \\
-w^{17}t^4R_{1,-1}^{-1}&=T_{1,-1}(w,t)\,Q_{1,-1}(w,t) \\
-w^{13}t^4R_{1,0}^{-1}&=T_{1,0}(w,t)\,Q_{1,0}(w,t) \\
-w^{17}t^4R_{1,1}^{-1}&=T_{1,1}(w,t)\,Q_{1,1}(w,t) \\
-w^{29}t^4R_{1,2}^{-1}&=T_{1,2}(w,t)\,Q_{1,2}(w,t)
\end{align}
where
\begin{align}
\hspace{-0.5cm}T_{1,-2}(w,t)=T_{1,2}(w,t)&=(w^2+1)(w^2+w+1)(1-w)^2(w+1)^2(1+t)(1-t)^2 \\
\hspace{-0.5cm}T_{1,-1}(w,t)=T_{1,1}(w,t)&=(w+1)(w^2+w+1)(1-w)^2(1+t)(t^2+t+1)(1-t)^4 \\
T_{1,0}(w,t)&=(1-w)^2(w+1)^2(1+t)(t^2+t+1)(1-t)^2
\end{align}
and
\begin{align}
Q_{1,-2}&=(1-tw)(1-tw^2)(t-w)(t-w^3)(t-w^4)(t-w^5)(t-w^6) \times\notag \\
&\quad\times (t^2-w)(t^2-w^2)(t^2-w^3)(t^2-w^4)(t^3-w^2)(t^3-w^3)(t^3-w^4) \times\notag \\
&\qquad\times (t^3-w^5)(t^3-w^6)  \\
Q_{1,-1}&=(1-tw)(1-tw^2)(1-t^2w)(t-w^2)^2(t-w^3)(t-w^4)(t^2-w) \times\notag\\
      &\quad\times(t^2-w^2)(t^2-w^3)(t^3-w)(t^3-w^2)(t^3-w^3)(t^3-w^4) \\
Q_{1,0}&=(1-tw)^2(1-tw^2)^2(1-t^2w)(1-t^2w^2)(1-t^3w)(1-t^3w^2)(t-w)^2\times\notag\\
      &\quad\times(t-w^2)^2(t^2-w)(t^2-w^2)(t^3-w)(t^3-w^2) \\
Q_{1,1}&=(1-tw^2)^2(1-tw^3)(1-tw^4)(1-t^2w)(1-t^2w^2)(1-t^2w^3)\times\notag\\
      &\quad\times(1-t^3w)(1-t^3w^2)(1-t^3w^3)(1-t^3w^4)(t-w)(t-w^2)(t^2-w) \\
Q_{1,2}&=(1-tw)(1-tw^3)(1-tw^4)(1-tw^5)(1-tw^6)(1-t^2w)(1-t^2w^2) \times\notag\\
      &\quad\times (1-t^2w^3)(1-t^2w^4)(1-t^3w^2)(1-t^3w^3)(1-t^3w^4)(1-t^3w^5)\times \notag \\
& \qquad \times (1-t^3w^6)(t-w)(t-w^2).
\end{align}
\msk

\fbox{n=2}
\msk

\begin{align}
w^{17}P_{2,-2}(w,t)\,R_{2,-2}^{-1}&=T_{2,-2}(w,t)\,Q_{2,-2}(w,t) \\
w^{14}P_{2,-1}(w,t)\,R_{2,-1}^{-1}&=T_{2,-1}(w,t)\,Q_{2,-1}(w,t) \\
w^{13}P_{2,0}(w,t)\,R_{2,0}^{-1}&=T_{2,0}(w,t)\,Q_{2,0}(w,t) \\
w^{14}P_{2,1}(w,t)\,R_{2,1}^{-1}&=T_{2,1}(w,t)\,Q_{2,1}(w,t) \\
w^{17}P_{2,2}(w,t)\,R_{2,2}^{-1}&=T_{2,2}(w,t)\,Q_{2,2}(w,t)
\end{align}
where
\begin{align}
T_{2,-2}(w,t)=T_{2,2}(w,t)&=(w^2+1)(w^2+w+1)(w-1)^2(w+1)^2(1+t)\times \notag \\
&\quad\times(t^2+t+1)(-1+t)^4 \\
T_{2,-1}(w,t)=T_{2,1}(w,t)&=(w+1)(w^2+w+1)(w-1)^2(1+t)(1-t)^2 \\
T_{2,0}(w,t)&=(1-w)^2(w+1)^2(1+t)(t^2+t+1)(1-t)^2
\end{align}
and 
\begin{align}
Q_{2,-2}&=(1-tw)(1-tw^2)^2(1-tw^4)(1-tw^6)(1-t^3w^2) (t-w)(t^2-w)  \times\notag \\
      &\qquad\times(t^2-w^2)(t^2-w^3)(t^2-w^4)(t^3-w^2)(t^3-w^4)(t^3-w^6)  \\
Q_{2,-1}&=(1-tw)^2(1-tw^2)(1-tw^3)(1-tw^5)(1-t^2w)(1-t^3w)(1-t^3w^3) \times\notag \\
      &\qquad\times(t-w^2)(t-w^3)(t^2-w)(t^2-w^2)(t^2-w^3) \times \notag\\
      &\qquad \quad \times(t^3-w)(t^3-w^3)(t^3-w^5) \\
Q_{2,0}&=(1-tw)(1-tw^2)^2(1-tw^4)(1-t^2w)(1-t^2w^2)(1-t^3w^2)(1-t^3w^4)   \times\notag\\
      &\qquad\times(t-w)(t-w^2)^2(t-w^4)(t^2-w)(t^2-w^2)(t^3-w^2)(t^3-w^4) \\
Q_{2,1}&=(1-tw^2)(1-tw^3)(1-t^2w)(1-t^2w^2)(1-t^2w^3)\times\notag\\
      &\qquad\times(1-t^3w)(1-t^3w^3)(1-t^3w^5)(t-w)^2(t-w^2)(t-w^3)(t-w^5) \times\notag\\
      &\qquad\quad\times(t^2-w)(t^3-w)(t^3-w^3) \\
Q_{2,2}&=(1-tw)(1-t^2w)(1-t^2w^2)(1-t^2w^3)(1-t^2w^4)(1-t^3w^2)(1-t^3w^4)    \times\notag\\
      &\qquad\times (1-t^3w^6)(t-w)(t-w^2)^2(t-w^4)(t-w^6)(t^3-w^2)
\end{align}
while the $P_{2,k}(w,t)$ are irreducible polynomials in $w,t$ of total degree up to $18$ which we do not display here. 
\msk

For the purposes of this exposition, the important information lies in the $Q_{1,k}$ and $Q_{2,k}$ which indicate the points $w$ inside the unit circle at which residues will need to be evaluated. The relevant factors are those of the form $(t^q-w^p)$ for positive integers $p,q$.  We list in Table~\ref{tab:pq} the values of $(p,q)$ corresponding to each $Q_{j,k}$. 
\bsk

\begin{table}  [!ht]
   \begin{center}\scriptsize
     \begin{tabular}{ |c|l| }
\hline
\rule[-0.8ex]{0ex}{3.0ex} 
$Q_{1,-2}$& (1,1),\phantom{(2,1),}(3,1),(4,1),(5,1),(6,1),(1,2),(2,2),(3,2),(4,2),\phantom{(1,3),}(2,3),(3,3),(4,3),(5,3),(6,3) \\\hline
\rule[-0.8ex]{0ex}{3.0ex} $Q_{1,-1}$&\phantom{(1,1),}(2,1),(3,1),(4,1),\phantom{(5,1),(6,1),}(1,2),(2,2),(3,2),\phantom{(4,2),}(1,3),(2,3),(3,3),(4,3) \\\hline
\rule[-0.8ex]{0ex}{3.0ex}
$Q_{1,0}$&(1,1),(2,1),\phantom{(3,1),(4,1),(5,1),(6,1),}(1,2),(2,2),\phantom{(3,2),(4,2),}(1,3),(2,3) \\ \hline
\rule[-0.8ex]{0ex}{3.0ex}
$Q_{1,1}$&(1,1),(2,1),\phantom{(3,1),(4,1),(5,1),(6,1),}(1,2) \\ \hline
\rule[-0.8ex]{0ex}{3.0ex}
$Q_{1,2}$&(1,1),(2,1)  \\ \hline
\rule[-0.8ex]{0ex}{3.0ex}
$Q_{2,-2}$&(1,1),\phantom{(2,1),(3,1),(4,1),(5,1),(6,1),}(1,2),(2,2),(3,2),(4,2),\phantom{(1,3),}(2,3),\phantom{(3,3),}(4,3),\phantom{(5,3),}(6,3) \\\hline
\rule[-0.8ex]{0ex}{3.0ex}
$Q_{2,-1}$&\phantom{(1,1),}(2,1),(3,1),\phantom{(4,1),(5,1),(6,1),}(1,2),(2,2),(3,2),\phantom{(4,2),}(1,3),\phantom{(2,3),}(3,3),\phantom{(4,3),}(5,3) \\\hline
\rule[-0.8ex]{0ex}{3.0ex}
$Q_{2,0}$&(1,1),(2,1),\phantom{(3,1),}(4,1),\phantom{(5,1),(6,1),}(1,2),(2,2),\phantom{(3,2),(4,2),(1,3),}(2,3),\phantom{(3,3),}(4,3) \\\hline
\rule[-0.8ex]{0ex}{3.0ex}
$Q_{2,1}$&(1,1),(2,1),(3,1),\phantom{(4,1),}(5,1),\phantom{(6,1),}(1,2),\phantom{(2,2),(3,2),(4,2),}(1,3),\phantom{(2,3),}(3,3) \\\hline
\rule[-0.8ex]{0ex}{3.0ex}
$Q_{2,2}$&(1,1),(2,1),\phantom{(3,1),}(4,1),\phantom{(5,1),}(6,1),\phantom{(2,1),(2,2),(3,2),(4,2),(1,3),}(2,3) \\\hline
\end{tabular}
\end{center}    
 \caption{Table of values $p,q$ for which the factor $(t^q-w^p)$ occurs in $Q_{j,k}$ \\ for $j=1,2$ and $k=-2\ldots2$.}  \label{tab:pq}
\end{table}
\begin{equation}
F(w,t)=\sum_{k=-2}^2 R_{1,k} + \sum_{k=-2}^2 R_{2,k}
\end{equation}
at the singularities $w$ lying inside the unit circle $|w|=1$, where $|t|<1$.  It is helpful to note that some of the relevant factors in the denominators of the $R_{j,k}$ no longer appear when the $R_{j,k}$ are summed and simplified.  In fact it turns out that
\begin{equation} 
F(w,t)=2w^{13}(1-w)^2\frac{P(w,t)}{Q(w,t)}
\end{equation}
where $P(w,t)$ is an irreducible polynomial in $w,t$ of total degree $55$ and integer coefficients, while
\begin{align}
Q(w,t)&=(\cdots)(t-w)(t^2-w)(t^2-w^2)(t^2-w^3)(t^2-w^4)(t^3-w)(t^3-w^2)\times\notag \\
      &\qquad\qquad\times(t^3-w^3)(t^3-w^4)(t^3-w^5)(t^3-w^6)
\end{align}
with the dots $(\cdots)$ representing factors involving $t$ only or otherwise not vanishing for $w$ inside the unit circle.  Note that factors with $(p,q)=(m,1)$ for $3\le m \le 6$ do not arise here although they do appear in Table~\ref{tab:pq}: the residues of the individual $R_{j,k}$ corresponding to these factors sum to zero. The case $(p,q)=(2,1)$ also does not appear explicitly in $Q(w,t)$, although $(t-w^2)$ is a factor of both $(t^2-w^4)$ and $(t^3-w^6)$ and so the \lq virtual' case $(2,1)$ needs to be considered.
\subsubsection{Residues of the $w$-singularities}
We now evaluate the residues of $F(w,t)$ at the singularities $w$ that lie inside the unit circle, dealing with each case $(p,q)$ in turn.  We write $W_{p.q}(t)$ for the sum of the residues at those points $w$ where $w^p=t^q$ , excluding those already found from any earlier $(p',q')$ with $p/q=p'/q'$. 
\msk

\noindent\fbox{$(p,q)=(1,1)$}
\begin{equation}
W_{1,1}(t)=\frac{t^4U_{1,1}(t)}{36\,V_{1,1}(t)}
\end{equation}
where $U_{1,1}(t)$ is a palindromic polynomial in $t$ of degree $56$ with positive integer coefficients, and
\begin{align}
V_{1,1}(t)&=(1-t)(1-t^2)^3(1-t^3)^2(1-t^4)^3(1-t^5)^3(1-t^6)^3 
\notag \times\\
&\qquad \times(1-t^7)^2(1-t^8)(1-t^9).
\end{align}
\fbox{$(p,q)=(2,1)$}
\begin{equation}
W_{2,1}(t)=\frac{U_{2,1}(t)}{6\,V_{2,1}(t)}
\end{equation}
where $U_{2,1}(t)$ is a palindromic polynomial in $t$ of degree $60$ with positive integer coefficients, and
\begin{align}
V_{2,1}(t)&=(1-t)^2(1-t^2)^2(1-t^3)^4(1-t^4)^3(1-t^5)^3(1-t^6)
\notag \times\\
&\qquad \times(1-t^7)^2(1-t^9)(1-t^{11}).
\end{align}
\fbox{$(p,q)=(1,2)$}
\begin{equation}
W_{1,2}(t)=-\frac{2t^{19}U_{1,2}(t)}{V_{1,2}(t)}
\end{equation}
where $U_{1,2}(t)$ is a palindromic polynomial in $t$ of degree $46$ with positive integer coefficients, and
\begin{align}
V_{1,2}(t)& =(1-t)^2(1-t^2)(1-t^3)^2(1-t^4)^2(1-t^5)^3(1-t^6)^2(1-t^7)^2
\notag \times\\
&\qquad \times (1-t^8)(1-t^9)^2(1-t^{11})(1-t^{13}).
\end{align}
\fbox{$(p,q)=(2,2)$} \quad excluding residues already occurring at $(p,q)=(1,1)$:
\begin{equation}
W_{2,2}(t)=-\frac{t^4U_{2,2}(t)}{4\,V_{2,2}(t)}
\end{equation}
where $U_{2,2}(t)$ is a palindromic polynomial in $t$ of degree $20$ with positive integer coefficients, and
\begin{align}
V_{2,2}(t)& =(1-t)^2(1-t^2)(1-t^3)(1-t^4)^2(1-t^5)(1-t^6)(1-t^7)
\notag \times\\
&\qquad \times (1-t^8)(1-t^{12}).
\end{align}
\fbox{$(p,q)=(3,2)$}\quad  Here the calculations are simplified by noting that $(3,2)$ appears in Table~\ref{tab:pq} only where $k=-2,-1$. It also turns out that the $(3,2)$ residue sum for $R_{j,-2}$ is equal to that for $R_{j,-1}$, $j=1,2$.  We find
\begin{equation}
W_{3,2}=\frac{2t^2U_{3,2}}{V_{3,2}}
\end{equation}
where $U_{3,2}(t)$ is a palindromic polynomial in $t$ of degree $54$ with positive integer coefficients, and
\begin{align}
V_{3,2}(t)&=(1-t)^5(1-t^2)^3(1-t^3)(1-t^4)^2(1-t^5)^3(1-t^7)^2(1-t^8)
\notag \times\\
&\qquad \times (1-t^{11})(1-t^{13}).
\end{align}
\fbox{$(p,q)=(4,2)$}\quad  Here $(4,2)$ appears in Table~\ref{tab:pq} only where $k=-2$. Excluding residues already occurring at $(p,q)=(2,1)$ we find
\begin{equation}
W_{4,2}=\frac{U_{4,2}}{2\,V_{4,2}}
\end{equation}
where $U_{4,2}(t)$ is a palindromic polynomial in $t$ of degree $8$ with positive integer coefficients, and
\begin{equation}
V_{4,2}(t)=(1+t)^3(1-t^2)(1+t^3)^2(1-t^4)^2(1-t^6)(1-t^8).
\end{equation}
\fbox{$(p,q)=(1,3)$}
\begin{equation}
W_{1,3}=\frac{2t^{32}U_{1,3}}{V_{1,3}}
\end{equation}
where $U_{1,3}(t)$ is a palindromic polynomial in $t$ of degree $16$ with positive integer coefficients, and
\begin{align}
V_{1,3}(t)&=(1-t^2)^3(1-t^3)(1-t^4)^3(1-t^5)^3(1-t^6)^3(1-t^7)^2(1-t^8)
\notag \times\\
&\qquad \times (1-t^9)^2(1-t^{11}).
\end{align}
\fbox{$(p,q)=(2,3)$}
\begin{equation}
W_{2,3}=\frac{2t^{11}U_{2,3}}{V_{2,3}}
\end{equation}
where $U_{2,3}(t)$ is a palindromic polynomial in $t$ of degree $52$ with positive integer coefficients, and
\begin{align}
V_{2,3}(t)&=(1-t)^3(1-t^2)(1-t^3)^2(1-t^4)^2(1-t^5)^3(1-t^6)^2(1-t^7)^2
\notag \times\\
&\qquad \times (1-t^8)(1-t^9)^2(1-t^{13}).
\end{align}
\fbox{$(p,q)=(3,3)$}\quad excluding residues already occurring at $(p,q)=(1,1)$:
\begin{equation}
W_{3,3}=\frac{2t^4(1+t)(1+t^2)U_{3,3}}{9 V_{3,3}}
\end{equation}
where $U_{3,3}(t)$ is a palindromic polynomial in $t$ of degree $14$ with positive integer coefficients, and
\begin{equation}
V_{3,3}(t)=(1-t^3)(1-t^5)(1-t^6)^2(1-t^9)^2(1-t^{12}).
\end{equation}
\fbox{$(p,q)=(4,3)$}
\begin{equation}
W_{4,3}=-\frac{2t^2U_{4,3}}{V_{4,3}}
\end{equation}
where $U_{4,3}(t)$ is a palindromic polynomial in $t$ of degree $50$ with positive integer coefficients, and
\begin{align}
V_{4,3}(t)=(1-t)^5(1-t^2)^2(1-t^3)^4(1-t^5)^3(1-t^7)^2(1-t^9)^2(1-t^{11}).
\end{align}
\fbox{$(p,q)=(5,3)$}
\begin{equation}
W_{5,3}=-\frac{2t\,U_{5,3}}{V_{5,3}}
\end{equation}
where $U_{5,3}(t)$ is a palindromic polynomial in $t$ of degree $82$ with positive integer coefficients, and
\begin{align}
V_{5,3}(t)&=(1-t)^2(1-t^2)^4(1-t^3)(1-t^4)^2(1-t^6)^2(1-t^7)^2(1-t^8)(1-t^9)^2 \notag \times\\
&\qquad \times (1-t^{11})(1-t^{12})(1-t^{13}).
\end{align}
\fbox{$(p,q)=(6,3)$}\quad The factor $(w^6-t^3)$ appears in $R_{j,k}$ only for $k=-2$, and the residue sums (excluding those already obtained at $(p,q)=(2,1)$) for $R_{1,-2}$ and for $R_{2,-2}$ coincide and give 
\begin{equation}
W_{6,3}=\frac{4\,U_{6,3}}{3\,V_{6,3}}
\end{equation}
where $U_{6,3}(t)$ is a palindromic polynomial in $t$ of degree $12$ with positive integer coefficients, and
\begin{equation}
V_{6,3}(t)=(1-t)(1-t^3)^2(1-t^6)^2(1-t^9)^2.
\end{equation}
Finally, adding all these residues yields the following result.
\begin{prop}
The sum $W(t)$ of all the residues of $F(w,t)$ for $|w|<1$ (where $|t|<1$) has the form
\begin{equation}
W(t)=4\frac{U(t)}{V(t)}
\end{equation}
where $U(t)$ is a palindromic polynomial in $t$ of degree $96$ with integer coefficients
and
\begin{align}
V(t)&=(1+t)(1-t^2)^3(1-t^3)(1-t^4)^2(1-t^5)^3(1-t^6)^2(1-t^7)^2(1-t^8) \notag \times\\
&\qquad \times (1-t^9)^2(1-t^{11})(1-t^{12})(1-t^{13}).
\end{align}
\end{prop}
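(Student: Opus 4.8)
The plan is to assemble $W(t)$ from the twelve residue sums $W_{p,q}(t)$ found above. By Table~\ref{tab:pq} and the ensuing discussion, the singularities of $F(w,t)$ with $|w|<1$ lie exactly at the points where $w^p=t^q$ for
\[
(p,q)\in\{(1,1),(2,1),(1,2),(2,2),(3,2),(4,2),(1,3),(2,3),(3,3),(4,3),(5,3),(6,3)\},
\]
the pairs $(m,1)$ with $3\le m\le6$ contributing nothing because their individual residues cancel. With the convention already adopted---each $W_{p,q}$ omits the residues counted under an earlier pair of equal ratio $q/p$, namely $(2,2),(3,3)$ under $(1,1)$ and $(4,2),(6,3)$ under $(2,1)$---every singularity inside the disc is counted once, so $W(t)=\sum_{(p,q)}W_{p,q}(t)$. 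First I would place this finite sum over a common denominator and then carry out the cancellations.

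The crux is that the reduced denominator $V(t)$ is much smaller than the least common multiple $L(t)$ of the $V_{p,q}(t)$. For instance $L(t)$ carries $(1-t^3)^4$ (from $V_{2,1}$ and $V_{4,3}$), as well as $(1-t^4)^3$ and $(1-t^6)^3$, whereas $V(t)$ retains only $(1-t^3)$, $(1-t^4)^2$ and $(1-t^6)^2$. The decisive step is therefore to rewrite each $W_{p,q}$ over $L(t)$, add the numerators, and verify that the combined numerator is divisible by $L(t)/V(t)$; after this reduction the overall rational constant collapses to exactly $4$. This is a large symbolic computation, best entrusted to a computer algebra system as with the residue calculations above, and its output is $W(t)=4U(t)/V(t)$ in lowest terms with $V(t)$ as stated.

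The assertions about $U(t)$ then admit a conceptual cross-check. Matching the prefactor $\tfrac1{8\pi i}$ of the $w$-integral in \eqref{e:wint1} shows that $2\pi i$ times the residue sum gives $\PP_{\mathbf\Gamma_0}^2(t)=\tfrac14 W(t)=U(t)/V(t)$, so the factor $4$ is precisely what makes $U/V$ the genuine $\mathbf\Gamma_0$-Molien series. Now $\mathbf\Gamma_0\cong\SO(3)\times\SO(3)$ acts on the $25$-dimensional $L(V,V)$ through $\wt D(g,h)=D(g)\otimes\overline{D(h)}$, whose determinant is $(\det D(g))^5(\det D(h))^5=1$, so $\mathbf\Gamma_0\subset\mathbf{SL}(25)$. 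Applying $\det(I-t^{-1}B)=(-1)^{25}t^{-25}\det(B)\,\det(I-tB^{-1})$ inside the Molien integral \eqref{e:moliendef}, using invariance of Haar measure under $h\mapsto h^{-1}$ and $\det\equiv1$, gives
\[
\PP_{\mathbf\Gamma_0}^2(1/t)=(-1)^{25}t^{25}\,\PP_{\mathbf\Gamma_0}^2(t)=-t^{25}\,\PP_{\mathbf\Gamma_0}^2(t).
\]
Since $(1-t^{-k})=-t^{-k}(1-t^k)$ and $(1+t^{-1})=t^{-1}(1+t)$, and $V(t)$ is $(1+t)$ times nineteen factors $(1-t^{d_i})$ with $\sum d_i=120$, one finds $V(1/t)=-t^{-121}V(t)$; feeding this into the functional equation forces $t^{96}U(1/t)=U(t)$, i.e.\ $U$ is palindromic, while matching pole orders at $t=\infty$ fixes $\deg U=121-25=96$. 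Integrality is immediate from the residue computation, and positivity of the eventual series coefficients follows from the Hilbert-series interpretation.

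I expect the main obstacle to be the scale and delicacy of the symbolic simplification: one adds twelve rational functions with numerators of degree up to $82$, and the collapse of $L(t)$ down to $V(t)$ rests on divisibilities of the \emph{summed} numerator---the disappearance of the surplus factors $(1-t^3)^3,(1-t^4),(1-t^6),\dots$---that are invisible term by term. The functional-equation argument certifies the symmetry and degree of $U$ but does not by itself produce $U(t)$ or $V(t)$, so the computational core cannot be circumvented.
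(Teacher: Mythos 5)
Your computational core---summing the twelve $W_{p,q}(t)$ with the stated exclusion conventions (residues at $(2,2),(3,3)$ counted under $(1,1)$, those at $(4,2),(6,3)$ under $(2,1)$), placing everything over a common denominator, and letting a computer algebra system carry out the collapse to $4U(t)/V(t)$---is exactly the paper's proof, which consists of the preceding MAPLE-assisted residue computations followed by ``adding all these residues yields the following result.'' That part is fine, and it necessarily carries the whole weight: the simplification cannot be done factor-by-factor, as you correctly observe.

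The genuine flaw is in the functional-equation ``certification'' of the palindromicity and degree of $U(t)$. The relation $\PP^2_{{\mathbf\Gamma}_0}(1/t)=(-1)^{25}t^{25}\,\PP^2_{{\mathbf\Gamma}_0}(t)$ cannot be derived the way you derive it. Your manipulation (determinant identity, inversion-invariance of Haar measure, $\det\wt D=1$) shows only that the Molien \emph{integral} evaluated at a parameter of modulus greater than one equals $(-1)^{25}t^{25}\,\PP^2_{{\mathbf\Gamma}_0}(t)$. For a finite group that suffices, since the Molien sum is globally one rational function of the parameter; for a positive-dimensional compact group it does not, because the integral on $|t|>1$ is \emph{not} the analytic continuation of the Hilbert series from $|t|<1$: as $t$ crosses the unit circle, poles of the Weyl integrand cross the contours $|z|=|w|=1$ and the residue sum jumps to a different rational function. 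A concrete counterexample to your argument is $\SO(2)\subset\mathbf{SL}(2,\R)$ acting on $\R^2$: here $H(t)=1/(1-t^2)$, so $H(1/t)=-t^2H(t)$, whereas your reasoning predicts $+t^2H(t)$. The correct general statement (Stanley's Gorenstein criterion, or Knop's theorem on the canonical module of invariant rings of connected semisimple groups) carries the sign $(-1)^d$ with $d$ the Krull dimension of the invariant ring---here $d=19$, not $25$---and your formula survives only by the accident that $19\equiv 25 \pmod 2$. So the functional equation you invoke is true but not established by your argument; as in the paper, the palindromicity, degree $96$ and integrality of $U(t)$ must be read off from the computed sum itself (or else justified by citing the appropriate Gorenstein-type theorem), not deduced from the finite-group Molien symmetry.
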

The algebraic form of $W(t)$ is somewhat anomalous for this context, as the coefficients in the numerator $U(t)$ are not all positive, and the denominator $V(t)$ has an apparently superfluous factor $(1+t)$.  Nevertheless, the total number of other factors (with repeats) is $19$ which is the codimension $25-6$ of the generic orbits of the group action as would be expected.  In fact expansion as a power series in $t$ yields a pleasing result.
\begin{theo}  \label{t:molienser0}
The Molien series $\PP^2_{{\mathbf\Gamma}_0}(t)$ for the action of ${\mathbf\Gamma}_0=\SO(3)\times\SO(3)$ on $L(V,V)\cong\R^{25}$ is given by
\begin{align}
\PP_{{\mathbf\Gamma}_0}^2(t)&=1+t^2+t^3+5t^4+5t^5+19t^6+27t^7+76t^8+136t^9+330t^{10}+626 t^{11}+\notag\\
&\quad+1391 t^{12}+2676 t^{13}+5497 t^{14}+10425 t^{15} + 20201 t^{16} + 37182 t^{17} \notag \\
&\qquad+68713t^{18} + 122489t^{19} + 217275t^{20} +  O(t^{21}).  \label{e:D20series}
\end{align}
\end{theo}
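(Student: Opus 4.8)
The plan is to read off Theorem~\ref{t:molienser0} from the preceding Proposition, by combining the residue sum $W(t)$ with the normalisation constant in~\eqref{e:integrand0} and then expanding the resulting rational function as a power series in~$t$. First I would make the two contour integrations in~\eqref{e:integrand0} explicit. The expression for $\PP^2_{{\mathbf\Gamma}_0}(t)$ is an iterated integral over $|z|=1$ and $|w|=1$, and since $|t|<1$ none of the singularities of the integrand meet either contour, so the residue theorem applies cleanly to each. Carrying out the $z$-integration replaces $\oint_{|z|=1}$ by $2\pi i$ times the sum of the $z$-residues inside the unit disc, which by construction is precisely $F(w,t)$; the subsequent $w$-integration then replaces $\oint_{|w|=1}$ by $2\pi i$ times $W(t)$, the sum of the $w$-residues of $F(w,t)$ inside the unit disc. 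Keeping track of the constants, the prefactor $-\tfrac1{16\pi^2}$ together with the two factors of $2\pi i$ yields $-\tfrac1{16\pi^2}(2\pi i)^2=\tfrac14$, so that $\PP^2_{{\mathbf\Gamma}_0}(t)=\tfrac14\,W(t)$.

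Next I would substitute the preceding Proposition, $W(t)=4\,U(t)/V(t)$, to obtain
\[
\PP^2_{{\mathbf\Gamma}_0}(t)=\frac{U(t)}{V(t)},
\]
a rational function regular at the origin, with $V(0)=1$; the exact cancellation of the factor~$4$ against the normalisation is a first consistency check. I would then expand $U(t)/V(t)$ as a formal power series by writing each denominator factor $(1-t^k)^{-1}=\sum_{m\ge0}t^{mk}$ as a geometric series, together with $(1+t)^{-1}=\sum_{m\ge0}(-1)^m t^m$, multiplying these expansions against the explicit numerator $U(t)$, and collecting terms through degree~$20$. This step is purely mechanical and is the part I would entrust to a computer algebra system.

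The main obstacle does not lie in the theorem itself, whose proof is bookkeeping followed by a routine expansion, but upstream in the residue computations summarised by the preceding Proposition; granting those, the only real risks here are an arithmetic error in the constant~$\tfrac14$ and a slip in the expansion. I would guard against both with two independent checks. First, every coefficient through degree~$20$ must be a non-negative integer, as demanded by the Molien property; a negative or fractional value would immediately expose an error in the residue sum or the bookkeeping. Second, the low-degree coefficients $1,0,1,1,5,\dots$ (for degrees $0$ through $4$) should match the explicit counts of invariants obtained by the direct degree-by-degree analysis carried out elsewhere in the paper, and agreement there gives strong confirmation of both $U(t)/V(t)$ and its expansion.
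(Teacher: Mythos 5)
Your proposal is correct and follows essentially the same route as the paper: the theorem is obtained exactly as you describe, by applying the residue theorem in $z$ and then in $w$ (the prefactor $-\tfrac{1}{16\pi^2}(2\pi i)^2=\tfrac14$ cancelling the factor $4$ in $W(t)=4\,U(t)/V(t)$ so that $\PP^2_{{\mathbf\Gamma}_0}(t)=U(t)/V(t)$), and then expanding this rational function as a power series by machine. Your two consistency checks (non-negative integer coefficients, and agreement of the degree $\le 4$ counts with the explicitly constructed invariants $I_2$, $I_3$ and the five quartic generators) are precisely the corroborations the paper itself relies on.
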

This form with positive integer coefficients suggests that suitable algebraic manipulation of $W(t)$ ought to put this rational function into a form consistent with data for primary and secondary generators of the ring of ${\mathbf\Gamma}_0$-invariants.  We are grateful to Boris Zhilinskii who showed us how to arrive at the following more informative conclusion.  

\begin{theo}  \label{t:molien1}
A rational Molien function $\PP^2_{{\mathbf\Gamma}_0}(t)$ for the action of ${\mathbf\Gamma}_0=\SO(3)\times\SO(3)$ on $L(V,V)\cong\R^{25}$ is given by
\begin{equation}
\PP_{{\mathbf\Gamma}_0}^2(t)=\frac{P_0(t)}{Q_0(t)}
\end{equation}
where
\begin{align}
 Q_0(t)&= (1-t^2)(1-t^3)(1-t^4)^3(1-t^5)(1-t^6)^2(1-t^7)^2(1-t^8)^2 \times  \notag \\
    &\qquad\times (1-t^9)^2(1-t^{10})^2(1-t^{11})(1-t^{12})(1-t^{13}) 
\end{align}
and $P_0(t)$ is a palindromic polynomial in $t$ of degree $113$ with positive integer coefficients as set out in \ref{app:molien_details}.
\end{theo}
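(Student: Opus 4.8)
The plan is to verify the stated identity by exhibiting $P_0$ explicitly and checking its three asserted properties (degree $113$, palindromic, positive coefficients). The starting point, already in hand, is that $\PP_{{\mathbf\Gamma}_0}^2(t)=U(t)/V(t)$: carrying out the $w$-integration by residues exactly as in the $\ell=1$ computation \eqref{e:wint1} contributes an overall factor $\tfrac14$, and the preceding proposition gives $W(t)=4U(t)/V(t)$, so the two factors of $4$ cancel. It therefore suffices to set $P_0:=U\cdot(Q_0/V)$ and to study this product. The denominator $Q_0$ is prescribed, not free: its $19$ factors record a candidate system of primary invariants of degrees $2,3,4,4,4,5,6,6,7,7,8,8,9,9,10,10,11,12,13$, consistent with the codimension $25-6=19$ of the generic orbits.

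First I would factor both $V$ and $Q_0$ into cyclotomic polynomials $\Phi_d(t)$. Comparing exponents gives
\begin{align*}
V&=\Phi_1^{19}\Phi_2^{10}\Phi_3^{6}\Phi_4^{4}\Phi_5^{3}\Phi_6^{3}\Phi_7^{2}\Phi_8\Phi_9^{2}\Phi_{11}\Phi_{12}\Phi_{13},\\
Q_0&=\Phi_1^{19}\Phi_2^{11}\Phi_3^{6}\Phi_4^{6}\Phi_5^{3}\Phi_6^{3}\Phi_7^{2}\Phi_8^{2}\Phi_9^{2}\Phi_{10}^{2}\Phi_{11}\Phi_{12}\Phi_{13},
\end{align*}
so that the quotient $Q_0/V=\Phi_2\,\Phi_4^{2}\,\Phi_8\,\Phi_{10}^{2}$ is a genuine polynomial of degree $1+4+4+8=17$. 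Hence $P_0=U\cdot\Phi_2\Phi_4^{2}\Phi_8\Phi_{10}^{2}$ is a polynomial, with no divisibility obstruction whatsoever. Since $U$ is palindromic and each cyclotomic factor $\Phi_d$ with $d\ge 2$ is self-reciprocal, the product $P_0$ is again palindromic, and $\deg P_0=\deg U+17=96+17=113$, as claimed.

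The substantive point — the only one that does not follow formally — is the positivity of the coefficients of $P_0$. Since $U(t)$ itself has some negative coefficients, positivity is genuinely an effect of multiplying by the particular factors $\Phi_2\Phi_4^2\Phi_8\Phi_{10}^2$ that have been absorbed into the denominator. I would establish it by direct machine expansion of $U\cdot\Phi_2\Phi_4^{2}\Phi_8\Phi_{10}^{2}$ and inspection of all $114$ coefficients (recorded in \ref{app:molien_details}); as an independent check, the power series of $P_0/Q_0$ must reproduce the coefficients listed in \eqref{e:D20series}. The hard part is therefore not this verification but the prior choice of $Q_0$: deciding which cyclotomic factors to move into the denominator so that positivity holds while the resulting primary degrees remain physically meaningful is the genuinely creative step, and is the insight we owe to Zhilinskii.
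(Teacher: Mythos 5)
Your proposal is correct and is essentially the paper's own route: the paper likewise passes from the residue sum $W(t)=4U(t)/V(t)$ to $\PP^2_{{\mathbf\Gamma}_0}(t)=U(t)/V(t)$ (the overall factor $\tfrac14$ from the two contour integrations cancelling the $4$), then rewrites this fraction over the prescribed denominator $Q_0(t)$ and establishes positivity of the resulting numerator only by the explicit expansion recorded in the appendix --- the ``suitable algebraic manipulation'' credited to Zhilinskii. Your cyclotomic bookkeeping $Q_0/V=\Phi_2\Phi_4^2\Phi_8\Phi_{10}^2$, which yields polynomiality of $P_0$, the degree count $96+17=113$ and palindromicity formally, simply makes explicit what the paper leaves implicit.
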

For the Molien integral~(\ref{e:int+int}) for the full group ${\mathbf\Gamma}=\SO(3)\wr \Z_2$ in the $\ell=2$ case it follows from~(\ref{e:G1case2}) and~(\ref{e:stardenom}) that we must evaluate the residues at singularities inside the unit circle of the function
\begin{equation}
H(w,t):=\frac{(1-v)^2}{v^2\,K(v,t)}
\end{equation}
with $K(v,t)$ as in~(\ref{e:stardenom}), namely at the points $v=t^2,\pm t,\pm\sqrt{t}$ and $\exp(2m\pi i/3)t^{\frac23}$ for $m=0,1,2$, and finally divide by the factor $(1-t)^{3}(1+t)^{2}$.  We arrive at the following result. 
\begin{prop}
\begin{equation*}
\PP^2_{{\mathbf\Gamma}_1}(t)=\frac{U_1(t)}{V_1(t)}
\end{equation*}
\noindent where $U_1(t)$ is a palindromic polynomial in $t$ of degree $42$ with integer coefficients and
\begin{align}
V_1(t)&=(1+t^2)(1-t)(1-t^3)^2(1-t^4)^2(1-t^5)(1-t^6)^2(1-t^7)
\times \notag \\
&\qquad \times (1-t^8)^2(1-t^{10}).
\end{align}
\end{prop}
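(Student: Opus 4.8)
The computation is a residue calculation of the type sketched in the remarks preceding the statement, so the plan is to turn the contour integral in \eqref{e:G1case2} into a finite sum of residues and then to assemble the outcome into a single rational function. First I would use the elementary identity $(1-v)(1-v^{-1})=-(1-v)^2/v$ to rewrite the integrand of \eqref{e:G1case2} as $-H(v,t)\,dv$, where $H(v,t)=(1-v)^2/\big(v^2K(v,t)\big)$ and $K$ is as in \eqref{e:stardenom}. Using the second (polynomial) form of $K$ in \eqref{e:stardenom}, namely $K=v^{-10}\Pi(v,t)$ with $\Pi$ the product of factors displayed there, gives $H(v,t)=(1-v)^2v^8/\Pi(v,t)$, which in particular is regular at $v=0$. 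Since the integrand is $-H(v,t)\,dv$, the residue theorem together with the prefactor $\tfrac1{4\pi i}(1-t)^{-3}(1+t)^{-2}$ of \eqref{e:G1case2} yields
\begin{equation}
\PP^2_{{\mathbf\Gamma}_1}(t)=-\tfrac12(1-t)^{-3}(1+t)^{-2}\sum_{|v|<1}\mathrm{Res}\,H(v,t),
\end{equation}
so everything reduces to locating and evaluating the residues of $H$ inside $|v|=1$.

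Second, I would classify these poles together with their multiplicities. For $0<t<1$ the only factors of $\Pi$ that vanish inside the unit disc are $(v-t)$, $(v^2-t)$, $(v^2-t^2)$, $(v-t^2)^2$ and $(v^3-t^2)$, the remaining factors vanishing only for $|v|>1$; this recovers the seven points listed before the statement. Two of them are double poles: $v=t^2$, coming from $(v-t^2)^2$, and $v=t$, which is a root both of $(v-t)$ and of the factor $(v+t)$ in $(v^2-t^2)=(v-t)(v+t)$. The remaining singular points, namely $v=-t$, the pair $v=\pm\sqrt t$ (the roots of $v^2-t$), and the triple $v=\exp(2m\pi i/3)t^{2/3}$ for $m=0,1,2$ (the roots of $v^3-t^2$), are all simple.

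Third comes the evaluation of the residues. The simple pole at $v=-t$ and the two double poles at $v=t$ and $v=t^2$ are handled by the standard formulae, the double poles each requiring one $v$-derivative, and each contributes a rational function of $t$ directly. For the irrational clusters I would sum over each Galois orbit at once rather than over its points singly: if a cluster is the zero set of $p(v)=v^n-c$ (with $n=2,\,c=t$, respectively $n=3,\,c=t^2$) and $H=g(v)/p(v)$ near it, then $\sum_a\mathrm{Res}_{v=a}H=\sum_a g(a)/p'(a)$, and substituting $p'(a)=na^{n-1}=nc/a$ turns this into $\tfrac1{nc}\sum_a a\,g(a)$, a symmetric function of the roots of $p$; the fractional powers of $t$ therefore cancel and a rational function of $t$ survives. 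Adding all the contributions over a common denominator and multiplying by $-\tfrac12(1-t)^{-3}(1+t)^{-2}$ then gives $U_1(t)/V_1(t)$ with $V_1$ of the stated form. I expect the main obstacle to lie not in the conceptual step nor in the pole bookkeeping, but in the size of the intermediate expressions: the cluster residues and their recombination, together with the cancellations that collapse the various denominators into the compact $V_1$, produce large polynomials whose controlled simplification is the real work, and as in the ${\mathbf\Gamma}_0$ case this is best carried out with computer algebra.

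Finally I would verify the degree and symmetry of the numerator. These are constrained by the Molien reflection identity, which for the coset takes the form $\PP^2_{{\mathbf\Gamma}_1}(1/t)=\pm t^{25}\PP^2_{{\mathbf\Gamma}_1}(t)$, the sign being governed by $\det D(h)$ on ${\mathbf\Gamma}_1$ and following from $\det\big(I-t^{-1}D(h)\big)=(-1)^{25}t^{-25}\det D(h)\,\det\big(I-tD(h)^{-1}\big)$ together with the stability of ${\mathbf\Gamma}_1$ under $h\mapsto h^{-1}$ and the invariance of the measure. Combined with the behaviour of each factor of $V_1$ under $t\mapsto 1/t$, this fixes both the degree of $U_1$ and its symmetry type, and furnishes a stringent independent check on the explicit numerator delivered by the residue calculation.
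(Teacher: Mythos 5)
Your first three steps reproduce the paper's own proof: the paper reduces \eqref{e:G1case2} in exactly this way to the residues of $H(v,t)=(1-v)^2/\big(v^2K(v,t)\big)$ inside the unit circle, at $v=t^2,\pm t,\pm\sqrt{t}$ and the three cube roots of $t^2$ (eight points, incidentally, not seven), followed by division by $(1-t)^3(1+t)^2$, with the algebra done by computer. Your pole bookkeeping (double poles at $v=t$, coming from $(v-t)(v^2-t^2)$, and at $v=t^2$) and your Galois-orbit summation $\sum_a g(a)/p'(a)$ are correct and mirror the $W_{p,q}$ device the paper uses in the ${\mathbf\Gamma}_0$ computation.

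Your fourth step, the consistency check, contains a genuine error. The pointwise identity $\det\big(I-t^{-1}D(h)\big)=(-1)^{25}t^{-25}\det D(h)\,\det\big(I-tD(h)^{-1}\big)$ is fine, and ${\mathbf\Gamma}_1$ is indeed inversion-stable with $\det\wt D^{(2)}\equiv 1$ on it; but integrating it only relates $\PP^2_{{\mathbf\Gamma}_1}(t)$ to the \emph{integral} \eqref{e:moliendef} evaluated at $s=1/t$ with $|s|>1$, and for a compact Lie group that integral is \emph{not} the analytic continuation of the Molien series: as $|t|$ crosses $1$ the poles $v=t^2,\pm t,\ldots$ cross the contour $|v|=1$ and the residue sum jumps. (Minimal example: $\SO(2)$ acting on $\R^2$ has $\PP(t)=1/(1-t^2)$, so $\PP(1/t)=-t^2\PP(t)$, whereas your argument, with $n=2$ and $\det=1$, would force $+t^2\PP(t)$.) Here your argument would resolve the ``$\pm$'' to $\PP^2_{{\mathbf\Gamma}_1}(1/t)=-t^{25}\,\PP^2_{{\mathbf\Gamma}_1}(t)$, but the true reflection law of the rational function is $\PP^2_{{\mathbf\Gamma}_1}(1/t)=+t^{25}\,\PP^2_{{\mathbf\Gamma}_1}(t)$: since $V_1(1/t)=t^{-67}V_1(t)$, the plus sign is exactly what the palindromy of $U_1$ (of degree $42=67-25$) asserts, and it is confirmed independently by the anti-palindromic $P_1$ over $Q_1=Q_0$ in \eqref{e:p1q1} and \ref{app:molien_details}. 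So your check, carried out as you describe it, would predict an anti-palindromic $U_1$ and falsely reject the proposition. The degree statement $\deg U_1=42$ is sign-independent, so that part of your check survives, but the symmetry type must be read off from the computed $U_1$ (or derived from a correct reflection law for compact Lie groups), not from the finite-group determinant argument.
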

Again this may be put into a more propitious form
\begin{equation}  \label{e:p1q1}
\PP^2_{{\mathbf\Gamma}_1}(t)=\frac{P_1(t)}{Q_1(t)}
\end{equation}
where $Q_1(t)=Q_0(t)$ and where $P_1(t)$ is a polynomial in $t$ of degree $113$ which is palindromic up to sign-reflection as also given in \ref{app:molien_details}.
\msk

Expanding $\PP^2_{{\mathbf\Gamma}_1}(t)$ as a power series in $t$ we find
\begin{align}
\PP^2_{{\mathbf\Gamma}_1}(t)&=1+t^2+t^3+3t^4+5t^5+9t^6+13t^7+28t^8+44t^9+72t^{10}+116t^{11}+\notag\\
&\quad +193t^{12}+294t^{13}+457t^{14}+689t^{15}+1039t^{16}+1526t^{17}  \notag \\
&\qquad + 2221t^{18} + 3177t^{19} + 4541t^{20} +  O(t^{21})  \label{e:D21series}
\end{align}
which finally gives the Molien series for $\mathbf{\Gamma}$ as the arithmetic mean of \eqref{e:D20series} and \eqref{e:D21series} as follows:
\begin{theo}  \label{t:molienser}
\begin{align}
\PP^2_{\mathbf{\Gamma}}(t)&=1+t^2+t^3+ 4t^4+ 5t^5+ 14t^6+ 20t^7+ 52t^8+ 90t^9 +201t^{10}+ 371t^{11}+\notag\\
     &\quad + 792t^{12}+ 1485t^{13} +2977t^{14}+5557t^{15}+ 10 620t^{16}+ 19 354t^{17}\notag \\
&\qquad + 35 467t^{18} + 62833t^{19} + 110908t^{20} + O(t^{21}).  \label{e:mseries}
\end{align}
\end{theo}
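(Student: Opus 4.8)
The plan is to read off $\PP^2_{\mathbf{\Gamma}}(t)$ from the coset decomposition of the Molien integral, so that no new residue computation is required. All of the analytic work has already been carried out in deriving the two ingredient series: the $\mathbf{\Gamma}_0$ series in Theorem~\ref{t:molienser0}, equation~\eqref{e:D20series}, and the $\mathbf{\Gamma}_1$ series in equation~\eqref{e:D21series}. The first step is to invoke~\eqref{e:int+int}, which splits the normalised integral over $\mathbf{\Gamma}=\SO(3)\wr\Z_2$ as half the sum of the integral over the index-two subgroup $\mathbf{\Gamma}_0$ and the integral over the coset $\mathbf{\Gamma}_1$. Applied to the Molien integrand $\det\bigl(I-t\wt D^{(2)}(h)\bigr)^{-1}$, this gives the identity of rational functions (equivalently, of formal power series)
\begin{equation}
\PP^2_{\mathbf{\Gamma}}(t)=\tfrac12\bigl(\PP^2_{\mathbf{\Gamma}_0}(t)+\PP^2_{\mathbf{\Gamma}_1}(t)\bigr),
\end{equation}
so that $\PP^2_{\mathbf{\Gamma}}(t)$ is obtained by averaging~\eqref{e:D20series} and~\eqref{e:D21series} coefficient by coefficient.

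Next I would make explicit \emph{why} this averaging is legitimate degree by degree, which also records the meaning of the three series. Writing $W_m$ for the space of $\mathbf{\Gamma}_0$-invariant homogeneous polynomials of degree $m$, the coefficient of $t^m$ in $\PP^2_{\mathbf{\Gamma}_0}(t)$ is $d_m=\dim W_m$, while transposition $\tau$ acts on $W_m$ as an involution and the coefficient of $t^m$ in $\PP^2_{\mathbf{\Gamma}_1}(t)$ is its trace $\tr(\tau|_{W_m})$; indeed the integral $\int_{\mathbf{\Gamma}_1}$ of the integrand equals $\int_{\mathbf{\Gamma}_0}$ of the same integrand composed with $\tau$, and the fixed-degree term is $\tr\bigl(\tau\,\Pi_{\mathbf{\Gamma}_0}\bigr)=\tr(\tau|_{W_m})$, where $\Pi_{\mathbf{\Gamma}_0}$ is the projection onto $W_m$. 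The dimension of the space of $\mathbf{\Gamma}$-invariants of degree $m$ is the dimension of the $\tau$-fixed subspace of $W_m$, which by the projection formula for the $\Z_2$-action equals $\tfrac12\bigl(d_m+\tr(\tau|_{W_m})\bigr)$. This is exactly the half-sum above, confirming that each coefficient of~\eqref{e:mseries} is the average of the corresponding coefficients of~\eqref{e:D20series} and~\eqref{e:D21series}.

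The final step is the purely arithmetic averaging, carried out up to $t^{20}$, which reproduces the coefficients stated in~\eqref{e:mseries}. A convenient built-in safeguard is that every sum $d_m+\tr(\tau|_{W_m})$ must be even, since the resulting coefficient is a dimension and hence an integer; this parity holds automatically because an involution satisfies $\tr(\tau|_{W_m})\equiv d_m\pmod 2$. Consequently there is no real obstacle at this stage: all of the difficulty resides in the residue evaluations already completed for~\eqref{e:D20series} and~\eqref{e:D21series}, and the only care needed here is bookkeeping---aligning the two series at matching degrees and using the parity check to guard against transcription errors.
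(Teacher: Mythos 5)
Your proposal is correct and takes essentially the same route as the paper: the paper likewise invokes the coset decomposition~\eqref{e:int+int} and obtains~\eqref{e:mseries} as the arithmetic mean of the series~\eqref{e:D20series} and~\eqref{e:D21series}, with all the analytic work residing in the earlier residue computations. Your added interpretation of the $\mathbf{\Gamma}_1$ coefficients as $\tr(\tau|_{W_m})$ and the parity check are harmless elaborations of the same argument, not a different method.
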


Observe that (as must be the case since ${\mathbf\Gamma}_0<{\mathbf\Gamma}$) the positive integer coefficients in
$\PP^2_{\mathbf\Gamma}$ are less than or equal to the corresponding coefficients in $\PP^2_{{\mathbf\Gamma}_0}(t)$ as in Theorem~\ref{t:molienser0}: their difference at degree $n$ is the number of independent homogeneous invariants of degree $n$ for ${\mathbf\Gamma}_0$ that are {\it not} invariant under the transposition action. We examine this a little more closely in the following section.
\msk

Note also that when \eqref{e:mseries} is expressed as a rational function with denominator $Q_0(t)$ the numerator is $\tfrac12(P_0(t)+P_1(t))$.  The sum $\tfrac12(P_0(1)+P_1(1))$ of the coefficients of this polynomial is the total number of secondary invariants (linearly independent but algebraically dependent on the $19$ primary invariants), and turns out to be 726,963,024 as indicated in Section~\ref{ss:wreath}.

\section{The algebra of irreducible super-tensors}
\subsection{Tensor products and invariants}
The tensor product of two (complex) unitary representations $D,D'$ of a compact Lie group $\G$ on linear spaces $V,V'$ respectively is (as is any other) isomorphic to a direct sum of irreducible representations of $\G$:
\begin{equation}
D\otimes D' \cong D_0\oplus D_1\cdots\oplus D_m.
\end{equation}
Explicitly this means that with respect to a chosen basis of $V\tp V'$ there is a unitary $nn'\times nn'$ matrix $C$, independent of $g\in \G$, such that 
\begin{equation} \label{e:blocks}
C\,\big(D(g)\tp D'(g)\big)\,C^*=\diag\bigl(D_0(g),D_1(g),\ldots,D_m(g)\bigr)
\end{equation}
for all $g\in \G$, where the right hand side is a block-diagonal matrix.
\msk   

Suppose that $D_0$ is the trivial $1$-dimensional representation of $\G$, and that it appears precisely once in~(\ref{e:blocks}). Orthogonal projection of $V\tp V'$ onto its $1$-dimensional $\G$-invariant linear subspace corresponding to $D_0$ is a $\G$-invariant linear function on $V\tp V'$, which in the case $V=V'$ gives a quadratic $\G$-invariant function on $V$. 
More generally, for $k=0,\ldots, m$ the orthogonal projection of $V\tp V'$ onto its invariant $n_k$-dimensional subspace $V_k$ supporting the representation $D_k$ is an equivariant linear map given by the $n_k\times N$ matrix $C_k$ consisting of the $n_k$ rows of $C$ corresponding to the $D_k$-block. Using this projection and 
iterating the procedure with higher order tensor products using~(\ref{e:blocks}) provides a systematic method for finding higher order invariants. We illustrate the use of this method below.
\msk

For our purposes we need to extend these considerations to the situation where $V,V'$ are replaced by $L(W^*,V),L(W'^*,V')$ for further representation spaces $W,W'$ for $\G$. In view of the natural identification
\begin{equation}
L(W^*\tp W'^*,V\tp V')\cong L(W^*,V)\tp L(W'^*,V')
\end{equation}
for linear spaces $V,W,V',W'$ through which the element $A\tp B$ on the right hand side corresponds to the linear map
\begin{equation}
W^*\tp W'^* \to V\tp V' : w^*\tp w'^* \mapsto Aw^*\tp Bw'^*
\end{equation}
on the left hand side, the tensor product
representation $D\tp D'$ of $\G$ on $V\tp V'$ defines a natural action of $\G$ on $L(W^*,V)\tp L(W'^*,V')$ by left multiplication (right composition) on $L(W^*\tp W'^*,V\tp V')$, and similarly two representations $E,E'$ of $\G$ on $W,W'$ define a $\G$-action on $L(W^*,V)\tp L(W'^*,V')$ by right dual multiplication (left dual composition). Schematically, we have for $S\in L(W^*,V)$ and $S'\in L(W'^*,V')$:
\begin{center}
\includegraphics[height=3.5cm]{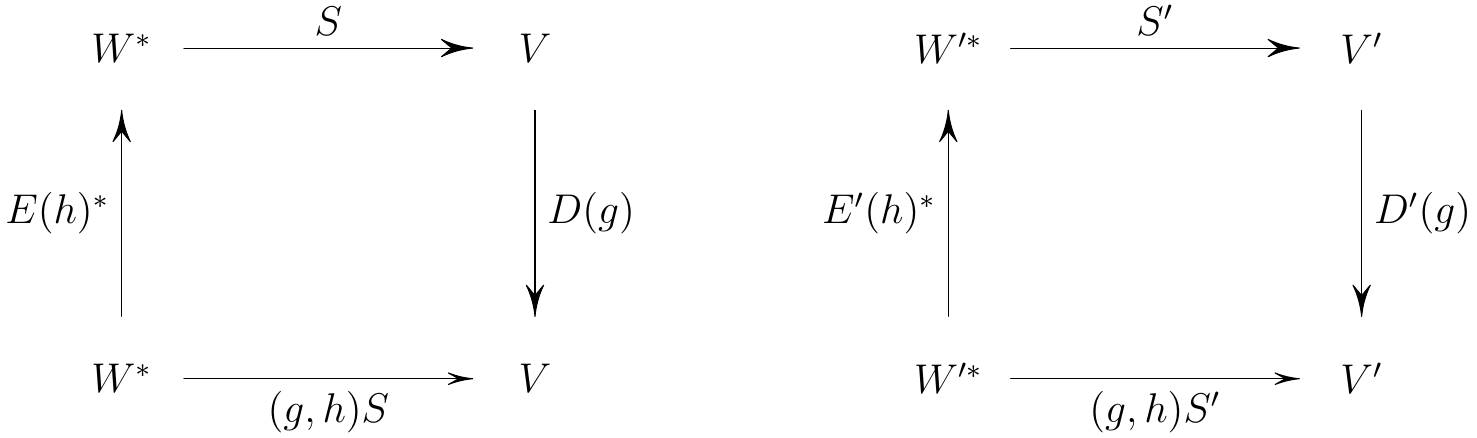} \, .
\end{center}
Together these define a representation of $\G\times \G$ on $L(W^*,V)\tp L(W'^*,V')$ given explicitly by
\begin{align}
(g,h):S\tp S' &\mapsto  \big[ D(g) S E^*(h)) \big] 
\tp \big[ D'(g) S' E'^*(h)) \big] \notag \\
& = \big[ D(g) \tp D'(g) \big] (S\tp S') 
\big[E^*(h) \tp E'^*(h) \big] \, . \label{e:bigrep}
\end{align}
Stated otherwise, the following diagram commutes:
\begin{center}
\includegraphics[height=3.5cm]{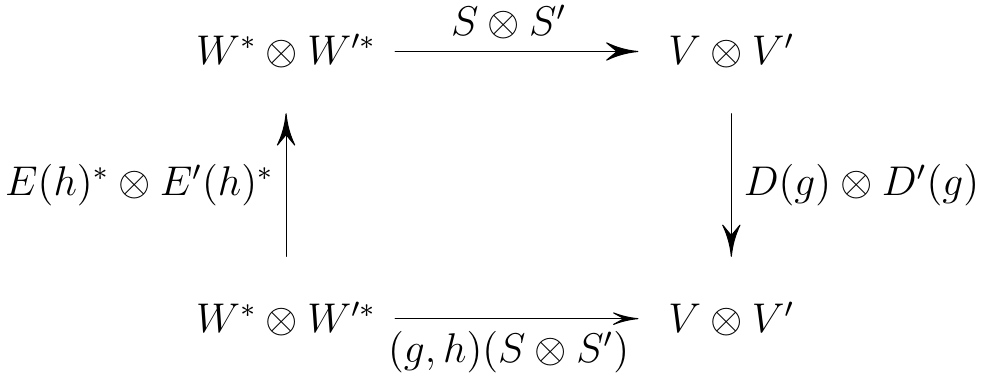}\, .
\end{center}
Let the projections $\{B_\ell:W\tp W'\to W_\ell\}$ be analogs for $E\tp E'$ of the projections $\{C_k:V\tp V'\to V_k\}$ for $D\tp D'$ described above. 
\begin{lemma}  \label{l:phimap}
For each pair $k,\ell$ the linear map
\begin{align}
\Phi_{k\ell}:L(W^*,V)\tp L(W'^*,V')\to L(W_\ell^*,V_k):S\tp S' \mapsto C_k\,S\tp S'\,B_\ell^{*} \notag
\end{align}
is $\G\times \G$-equivariant.
\end{lemma}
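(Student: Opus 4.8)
The plan is to prove equivariance by a direct computation that chases the definitions of the two $\G\times\G$-actions and exploits the intertwining properties of the projections $C_k$ and $B_\ell$. First I would make the two actions explicit. On the source $L(W^*,V)\tp L(W'^*,V')$ the action of $(g,h)$ is the one given in~\eqref{e:bigrep}, i.e. $S\tp S'\mapsto \big[D(g)\tp D'(g)\big](S\tp S')\big[E^*(h)\tp E'^*(h)\big]$ after identifying $S\tp S'$ with an element of $L(W^*\tp W'^*,V\tp V')$. On the target $L(W_\ell^*,V_k)$ the relevant action is the wreath-type action of Section~\ref{ss:wreath} for the pair of representations $(D_k,E_\ell)$, namely $T\mapsto D_k(g)\,T\,E_\ell^*(h)$.

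Next I would record the two intertwining relations that make the projections equivariant. From~\eqref{e:blocks}, selecting the rows of the unitary matrix $C$ that form the block $C_k$ gives $C_k\big(D(g)\tp D'(g)\big)=D_k(g)\,C_k$ for all $g$, so $C_k$ intertwines on the left. The analogous statement for the $E\tp E'$ projections reads $B_\ell\big(E(g)\tp E'(g)\big)=E_\ell(g)\,B_\ell$. The step I regard as the crux is to dualize the latter: taking adjoints and using that the dual of a tensor product of maps is the tensor product of the duals, $\big(E(g)\tp E'(g)\big)^*=E^*(g)\tp E'^*(g)$, yields the right-hand intertwining identity $\big[E^*(g)\tp E'^*(g)\big]B_\ell^*=B_\ell^*\,E_\ell^*(g)$, now with $B_\ell^*:W_\ell^*\to W^*\tp W'^*$ acting on the right.

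With these two identities in hand the verification is mechanical. Applying $\Phi_{k\ell}$ to the transformed element $(g,h)\cdot(S\tp S')$, I would pull $C_k$ past $D(g)\tp D'(g)$ using the left-intertwining relation to extract a factor $D_k(g)$ on the far left, and push $B_\ell^*$ past $E^*(h)\tp E'^*(h)$ using the dualized relation (with $g=h$) to extract a factor $E_\ell^*(h)$ on the far right. What remains in the middle is exactly $C_k(S\tp S')B_\ell^*=\Phi_{k\ell}(S\tp S')$, so the whole expression collapses to $D_k(g)\,\Phi_{k\ell}(S\tp S')\,E_\ell^*(h)$, which is precisely the target action applied to $\Phi_{k\ell}(S\tp S')$. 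Since $\Phi_{k\ell}$ is manifestly linear and these manipulations hold for every decomposable $S\tp S'$ and extend by linearity, this establishes $\G\times\G$-equivariance. The only real obstacle is the dualization bookkeeping in the crux step—keeping track of the direction of $B_\ell^*$ and confirming that the target action is $T\mapsto D_k(g)\,T\,E_\ell^*(h)$ rather than a variant; everything else is routine.
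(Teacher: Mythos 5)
Your proof is correct and follows essentially the same route as the paper: both rest on the two intertwining identities $C_k\big(D(g)\tp D'(g)\big)=D_k(g)\,C_k$ and $\big(E^*(h)\tp E'^*(h)\big)B_\ell^*=B_\ell^*\,E_\ell^*(h)$, followed by the same one-line computation pulling $D_k(g)$ to the left and $E_\ell^*(h)$ to the right. Your explicit derivation of the second identity by dualizing $B_\ell\big(E(g)\tp E'(g)\big)=E_\ell(g)\,B_\ell$, and your remark about extending from decomposable tensors by linearity, merely spell out steps the paper leaves implicit.
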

\begin{proof}
From the expression~(\ref{e:blocks}) we have
\begin{equation*} 
C_k\,\big(D(g)\tp D'(g)\big) = D_k(g)\,C_k
\end{equation*}
and likewise
\begin{equation*} 
\big(E^*(g)\tp E'^*(g)\big)\,B_\ell^* = B_\ell^*\,E_\ell^*(g)
\end{equation*}
for each pair $k,\ell$.  Therefore for $(g,h) \in \G\times \G$ we have
\begin{align*}
\Phi_{k\ell}((g,h)S\tp S')&= C_k\big[ D(g) \tp D'(g) \big] (S\tp S') \big[ E^*(h) \tp E'^*(h) \big] B_\ell^{*} \notag \\
  &=D_k(g)\,C_k \,(S\tp S')\,B_\ell^*\,E_\ell^*(h) \\
  &=(g,h)\Phi_{k\ell}(S\tp S').
\end{align*}
\end{proof}
We shall apply this in the case of $\G=\SO(3)$ with $V=V'=W=W'\cong\R^5$ and with $D,D',E,E'$ all given by the $j=2$ irreducible representation $D^{(2)}$.  The $j=0$ component of~(\ref{e:bigrep}) defined by $\Phi_{00}$ yields a quadratic invariant for the action of $\SO(3)\times\SO(3)$ on $L(V,V)$, and we show how further iteration of this construction yields explicit invariants of degree~3 and~4.
\msk

\subsection{The case of $\SO(3)$ and Clebsch-Gordan coefficients}
The context in which $\G=\SO(3)$ and $D,D'$ are irreducible is thoroughly studied in the Quantum Mechanics literature~\cite{Rose,Edmonds}.  If $D=D^{(j_1)}$ and $D'=D^{(j_2)}$ where $D^{(j_1)}, D^{(j_2)}$ are the unique irreducible representations on linear spaces $V^{(j_1)},V^{(j_2)}$ of dimension $2j_1+1,2j_2+1$ respectively, then the {\it Clebsch-Gordan formula} or {\it series}~\cite{BT,Hamermesh} 
states that 
\begin{equation}  \label{e:clebschf}
D^{(j_1)}\otimes D^{(j_2)}\cong D^{(|j_1 - j_2|)} \oplus D^{(|j_1 - j_2|+1)} \oplus \cdots \oplus D^{(j_1 + j_2)}
\end{equation}
or, more explicitly, that with spherical harmonics $Y_{jm},m=-j,\ldots,j$ chosen as a basis for $V^{(j)}$ and giving the basis 
$\{Y_{j_1 m_1} \tp Y_{j_2 m_2}\}$ for $V^{(j_1)}\otimes V^{(j_2)}$
there exists a unique unitary $N\times N$ matrix $\CC$, where $N=(2j_1+1)(2j_2+1)$, such that 
\begin{align}
\CC \big[\DD^{(j_1)} \tp \DD^{(j_2)}\big] \CC^{*} = 
\begin{pmatrix}
\DD^{(|j_1 - j_2|)} & 0 & \cdots & 0 \\
0 & \DD^{(|j_1 - j_2|+1)} & \cdots & 0 \\
\vdots & \vdots & \ddots & \vdots \\
0 & 0 & \cdots & \DD^{(j_1 + j_2)}
\end{pmatrix} \, .
\label{eq:CC_matrixform}
\end{align}
where the matrix $\DD^{(j)}$ is called the Wigner rotation matrix of rank $j$. The entries of the matrix $\CC$ are known as {\it Clebsch-Gordan coefficients}~\cite{Rose,Hamermesh} and are usually written as
\begin{equation}
\CC_{jm;m_1 m_2} = \cg{jm}{j_1 m_1; j_2 m_2} \, , 
\end{equation}
where for fixed $j_1,j_2$ the pairs of indices $j,m$ and $m_1, m_2$ label rows and columns respectively. These coefficients vanish unless
\begin{subequations}
\begin{align}
\qquad |j_1 - j_2| \leq j \leq j_1 + j_2 
\, , \qquad m_1 + m_2 = m \qquad \quad \\
-j_1 \leq m_1 \leq j_1 
\, , \qquad -j_2 \leq m_2 \leq j_2 
\, , \qquad -j \leq m \leq j \, .
\end{align}
\label{e:cgprops}
\end{subequations}
Note that the right hand side of~(\ref{e:clebschf}) includes the trivial $1$-dimensional representation $D^{(0)}$ if and only if $j_1=j_2$.
\msk

The use of the explicit matrix representation in terms of spherical harmonics is particularly convenient since many detailed results are readily available in the Quantum Mechanics literature.
For later reference it is also useful to introduce the more symmetric  Wigner $3$-$j$ symbols~\cite{Edmonds} defined as
\begin{align}
\threej{j_1 & j_2 & j_3}{m_1 & m_2 & m_3} & 
= \frac{(-1)^{j_1-j_2-m_3}}{\sqrt{2j_3+1}} 
\cg{j_3 (-m_3)}{j_1 m_1;j_2 m_2} \, .
\end{align}
These vanish unless
\begin{equation}m_1 + m_2 + m_3 = 0\, , \qquad 
|j_1 -j_2| \leq j_3 \leq j_1 + j_2 \, .
\end{equation}
Finally, we record the following special case of the Clebsh-Gordan coefficients
\begin{align}
\threej{j_1 & j_2 & 0}{m_1 & m_2 & 0} = 
\cg{0 0}{j_1 m_1 ; j_2 m_2} = \delta_{j_1,j_2}\delta_{m_1,-m_2} \frac{(-1)^{j_1-m_1}}{\sqrt{2j_1+1}} \, ,
\label{eq:cg_special_case}
\end{align}
where $\delta_{h,k}$ is the Kronecker symbol equal to $1$ if $h=k$ and zero otherwise. We refer to any standard textbook on Quantum Mechanics for a more complete review of the numerous interesting identities and symmetries of Clebsch-Gordan coefficients and $3$-$j$ symbols.
\subsection{Construction of invariants for the $\SO(3)\times\SO(3)$-action on $L(V^{(2)},V^{(2)})$}   
We now focus on the case $j_1=j_2=2$. Let $\CC_{j}$ denote the rectangular matrix obtained from $\CC$ by selecting the $(2j+1)$-rows associated with the $j^{\text{th}}$ block in~\eqref{eq:CC_matrixform}, so that $\CC_{j}$ has dimensions $(2j+1) \times 25$.  Taking $j=0$ (so $m=0$) and using~(\ref{e:cgprops}) we see from Lemma~\ref{l:phimap} with $k,\ell=0,0$ and Eq.\eqref{eq:cg_special_case} that for $S\in L(V^{(2)},V^{(2)})$
\begin{align}
\Phi_{00}(S\tp S)& =\sum_{m,m}(-1)^{m+m'} \cg{00}{2 m_1; 2 m_2} \cg{00}{2 m_1'; 2 m_2'} S_{m_1 m_1'} S_{m_2\, m_2'} \notag \\
& = \frac{1}{5} \sum_{m,m}(-1)^{m+m'} S_{m m'} S_{-m\, -m'} \, ,
\end{align}
with $-2\le m,m'\le 2$. Furthermore, since 
any Wigner rotation matrix $A=\DD^{(\ell)}$ automatically satisfies
\[
A_{m m'}^*=(-1)^{m+m'}A_{-m\,-m'}
\]
for $-\ell\le m,m'\le\ell$, the order parameter matrix $S$ (obtained as an ensemble average over such~$A$: see~\cite{2011ST}) has the same properties
and thus we immediately obtain 
\begin{equation}
\Phi_{00}(S\tp S) = I_2:= \|S\|^2 \, , \label{eq:invariant_2}
\end{equation}
as the expected quadratic invariant.
\msk

With a little more effort we can construct a cubic invariant. To this end, we apply the same process as above at degree~3 to obtain:
\begin{align}
\Phi_{00} [\Phi_{jj'}(S\tp S) \tp S]  = \sum_{m,m',m_3,m_3'}& \cg{00}{j m; 2 m_3} \cg{00}{j' m'; 2 m_3'} \times \notag \\
& \qquad \times[\Phi_{jj'}(S\tp S)]_{mm'} S_{m_3 m_3'} \, .
\end{align}
After the substitution of Eq.\eqref{eq:cg_special_case}, we can see that the only non vanishing cubic invariant comes from the choice $j=j'=2$. Thus, we obtain the following cubic invariant (see also \cite{Sattinger}):
\begin{align}
I_3 = \sum_{m_i, m_i'=-2}^{2} \threej{2 & 2 & 2}{m_1 & m_2 & m_3} \threej{2 & 2 & 2}{m_1' & m_2' & m_3'}
 S_{m_1 m_1'} S_{m_2 m_2'} S_{m_3 m_3'} \, . \label{eq:invariant_3}
\end{align}  
From the Molien series $\PP^2_{{\mathbf\Gamma}_0}$ in Theorem~\ref{t:molienser0} we expect a single generator for ${\mathbf\Gamma}_0$-invariants at each homogeneous degree~2 and~3, and so $I_3$ is the unique degree-$3$ invariant up to scalar multiplication. Moreover, we observe that $I_2$ and $I_3$ are clearly also invariant with respect to the transposition $\tau$ which exchanges $m_i$ and $m_i'$. Therefore we conclude the following result.
\begin{prop}
The invariants $I_2,I_3$ generate the quadratic and the cubic invariants of the whole group $\mathbf{\Gamma} \cong \SO(3) \wr\Z_2$.
\end{prop}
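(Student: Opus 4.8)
The plan is to reduce the statement to a single nonvanishing check by combining the invariance of $I_2,I_3$ (already in hand) with the dimension count read off from the Molien series. Invariance is not at issue: Lemma~\ref{l:phimap} shows that $\Phi_{00}$ and its iterate produce $\mathbf\Gamma_0=\SO(3)\times\SO(3)$-invariants, and the symmetry under the transposition $\tau$ (interchanging $m_i\leftrightarrow m_i'$) noted just before the statement upgrades $I_2,I_3$ to invariants of the full wreath product $\mathbf\Gamma$. So the whole task is to certify that these two invariants are nonzero and that nothing else survives at degrees $2$ and $3$.

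Next I would invoke Theorem~\ref{t:molienser}: the expansion $\PP^2_{\mathbf\Gamma}(t)=1+t^2+t^3+O(t^4)$ shows that the spaces of homogeneous $\mathbf\Gamma$-invariant polynomials in degrees $2$ and $3$ are each exactly one-dimensional. Hence it suffices to know that $I_2$ and $I_3$ are not identically zero, since a nonzero element of a one-dimensional space spans it. Within the subalgebra $\R[I_2,I_3]$ the only monomial of degree $2$ is $I_2$ and, because $I_2$ has even degree, the only monomial of degree $3$ is $I_3$; thus these two invariants account for the entire invariant space at both degrees, which is precisely what \lq generate' means here.

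The only genuine computation — and hence the main obstacle — is the nonvanishing of $I_3$. For $I_2=\|S\|^2$ this is immediate, the form being positive definite. For $I_3$ I would evaluate at the admissible matrix $S=\mathrm{Id}$ (which satisfies the reality relation $S_{mm'}^{*}=(-1)^{m+m'}S_{-m\,-m'}$): since $S_{m_i m_i'}=\delta_{m_i m_i'}$, the double sum in~\eqref{eq:invariant_3} collapses to $I_3(\mathrm{Id})=\sum_{m_1,m_2,m_3}\threej{2 & 2 & 2}{m_1 & m_2 & m_3}^2$, and the orthogonality relation $\sum_{m_1,m_2}\threej{2 & 2 & 2}{m_1 & m_2 & m_3}\threej{2 & 2 & 2}{m_1 & m_2 & m_3'}=\tfrac15\,\delta_{m_3 m_3'}$ yields $I_3(\mathrm{Id})=\sum_{m_3}\tfrac15=1\neq0$. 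The point worth flagging is that knowing merely that $\Phi_{22}$ is a nonzero projection (equivalently, that $\mathrm{Sym}^2 V^{(2)}=V^{(0)}\oplus V^{(2)}\oplus V^{(4)}$ contains a $V^{(2)}$) is \emph{not} enough: one must verify that pairing $\Phi_{22}(S\otimes S)$ against a third copy of $S$ through $\Phi_{00}$ survives as a nonzero cubic form, and this is exactly what the evaluation at the identity settles. With $I_2,I_3$ both nonzero, $I_2$ spans the quadratic and $I_3$ spans the cubic $\mathbf\Gamma$-invariants, completing the argument.
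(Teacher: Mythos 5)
Your proof is correct, and its skeleton matches the paper's: invariance under $\mathbf{\Gamma}_0$ comes from the construction via $\Phi_{00}$ (Lemma~\ref{l:phimap}), invariance under $\tau$ is read off from the symmetry of the expressions, and a Molien series reduces everything to a one-dimensionality statement at degrees $2$ and $3$. There are two differences worth noting. First, the paper quotes the subgroup series $\PP^2_{\mathbf{\Gamma}_0}$ of Theorem~\ref{t:molienser0} to conclude that $I_2,I_3$ span the $\mathbf{\Gamma}_0$-invariants and then uses $\tau$-invariance to descend to $\mathbf{\Gamma}$, whereas you quote the full-group series $\PP^2_{\mathbf{\Gamma}}$ of Theorem~\ref{t:molienser} directly; both are valid, your route being slightly leaner while the paper's yields the small byproduct that the $\mathbf{\Gamma}_0$- and $\mathbf{\Gamma}$-invariant spaces actually coincide at these degrees. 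Second, and more substantively, you supply a step the paper leaves implicit: asserting that $I_3$ is \emph{the} degree-$3$ invariant up to scalar requires $I_3\not\equiv 0$, and the paper's remark that ``the only non vanishing cubic invariant comes from the choice $j=j'=2$'' only records that the contractions with $(j,j')\neq(2,2)$ vanish identically, not that the surviving one is nonzero. Your evaluation at the admissible point $S=\mathrm{Id}$ (which does satisfy the reality condition $S^*_{mm'}=(-1)^{m+m'}S_{-m\,-m'}$), giving $I_3(\mathrm{Id})=\sum_{m_1,m_2,m_3}\threej{2 & 2 & 2}{m_1 & m_2 & m_3}^2=1$ by $3$-$j$ orthogonality, closes this gap concretely and is a genuine improvement on the argument as written.
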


The structure of the invariants is, however, more involved at degree four. To describe it more closely, let us first define the following super-tensors ($j,j' = 0,\ldots,4$)
\begin{align}  \label{e:Udef}
U^{(j,j')}_{k k'} = \sum_{\substack{|m_i|\leq j \\ |m_i'|\leq j'}} 
\threej{2 & 2 & j}{m_1 & m_2 & k} \threej{2 & 2 & j'}{m_1' & m_2' & k'}
S_{m_1 m_1'} S_{m_2 m_2'} \, .
\end{align}
The $\SO(3) \times \SO(3)$ invariants of degree~$4$ can then be written as contractions (scalar product) of the $U^{(j,j')}$:
\begin{equation}  \label{e:I4def}
I_{4}^{(j,j')} = \sum_{\substack{|k|\leq j \\ |k'|\leq j'}} (-1)^{k+k'} U^{(j,j')}_{k k'} U^{(j,j')}_{-k,-k'} \, ,
\end{equation}
of which there are 25 since both $j$ and $j'$ range from $0$ to $4$. However, as we next show, these are not independent. Let us introduce the following definitions for the symmetric and skew-symmetric part of $I_{4}^{(j,j')}$:
\begin{align}
I_{4}^{[j,j']} = \frac{1}{2} \big(I_{4}^{(j,j')} + I_{4}^{(j',j)} \big) \, , \quad 
I_{4}^{\{j,j'\}} = \frac{1}{2} \big(I_{4}^{(j,j')} - I_{4}^{(j',j)} \big) \, .
\end{align}
As described in detail in \ref{app:quartic}, the following general relations hold:
\begin{subequations}
\begin{align}
I_{4}^{(j,j')} & = 0 \quad \text{ if } j+j' \text{ is odd, }  \label{eq:I4_rel_1} \\
\big( I_{4}^{(j,j')} \big)^{*} & = I_{4}^{(j,j')} \, , \label{eq:I4_rel_2}\\
\tau I_{4}^{[j,j']} & = I_{4}^{[j,j']} \, , \qquad \tau I_{4}^{\{j,j'\}} = -I_{4}^{\{j,j'\}} \label{eq:I4_rel_3} \, .
\end{align}
\end{subequations}
There are 12 relations given by~(\ref{eq:I4_rel_1}) which reduces the number of potentially independent invariants~(\ref{e:I4def}) from~25 to~13.  Moreover, as can be checked directly, the following less natural identities also hold:
%
%
\begin{subequations}
\begin{align}
4 \, I_{4}^{(0,0)} + 9 \, I_{4}^{(1,1)} + 5\, I_{4}^{(2,2)} 
- 14\, I_{4}^{(3,3)} - 54\, I_{4}^{(4,4)} & = 0 \, , \\
60\, I_{4}^{(0,0)} + 9\, I_{4}^{(1,1)} + 245\, I_{4}^{(2,2)}
-784\, I_{4}^{(3,3)} - 280\, I_{4}^{[0,2]} & = 0 \, , \\
212\, I_{4}^{(0,0)} - 909\, I_{4}^{(1,1)} + 2695\, I_{4}^{(2,2)}
-3136\, I_{4}^{(3,3)} - 1512\, I_{4}^{[0,4]} & = 0 \, , \\
100\, I_{4}^{(0,0)} + 99\, I_{4}^{(1,1)} - 1225\, I_{4}^{(2,2)}
+784\, I_{4}^{(3,3)} - 1008\, I_{4}^{[1,3]} & = 0 \, , \\
220\, I_{4}^{(0,0)} - 387\, I_{4}^{(1,1)} - 535\, I_{4}^{(2,2)}
+ 112\, I_{4}^{(3,3)} - 2160\, I_{4}^{[2,4]} & = 0 \, , \\
5\, I_{4}^{\{0,2 \}} - 9\, I_{4}^{\{0,4 \}} & = 0\, , \\
5\, I_{4}^{\{0,2 \}} - 6\, I_{4}^{\{1,3 \}} & = 0\, , \\
7\, I_{4}^{\{0,2 \}} + 18\, I_{4}^{\{2,4 \}} & = 0 \, .
\end{align}
\end{subequations}
See \ref{app:quartic} or further details.
These eight identities reduce the number of potentially independent degree-4 invariants from~13 to~5.  From the Molien series $\PP_{{\mathbf\Gamma}_0}^2(t)$ in Theorem~\ref{t:molienser0} we expect to have five linearly independent degree-4 invariants of ${\mathbf\Gamma}_0=\SO(3)\times \SO(3)$, and we may accordingly take these to be
$I_{4}^{(0,0)}$, $I_{4}^{(1,1)}$, $I_{4}^{(2,2)}$, $I_{4}^{(3,3)}$, $I_{4}^{\{0,2 \}}$, say.

The relations~\eqref{eq:I4_rel_3} show that, of these five invariants of ${\mathbf\Gamma}_0$, all except the last are invariant under the transposition operator~$\tau$.  This is consistent with the Molien series $\PP_{\mathbf\Gamma}^2(t)$ in Theorem~\ref{t:molienser} which predicts only four linearly independent degree-4 invariants of $\mathbf{\Gamma}=\SO(3)\wr \Z_2$. To summarise, we have established the following result:
\begin{prop}
The invariants $I_{4}^{(0,0)}$, $I_{4}^{(1,1)}$, $I_{4}^{(2,2)}$, $I_{4}^{(3,3)}$ generate the quartic invariants of the whole group $\mathbf{\Gamma} \cong \SO(3) \wr\Z_2$.
\end{prop}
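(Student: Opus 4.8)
The plan is to exploit the semidirect-product structure $\mathbf{\Gamma}=\mathbf{\Gamma}_0\rtimes\langle\tau\rangle$: since $\mathbf{\Gamma}$ is generated by $\mathbf{\Gamma}_0$ together with the involution $\tau$, a homogeneous polynomial is $\mathbf{\Gamma}$-invariant precisely when it is simultaneously $\mathbf{\Gamma}_0$-invariant and $\tau$-invariant. Because $\tau$ normalises $\mathbf{\Gamma}_0$ and respects the grading, it acts as a linear involution on the finite-dimensional space $W_4$ of degree-$4$ $\mathbf{\Gamma}_0$-invariants, and the space of degree-$4$ $\mathbf{\Gamma}$-invariants is exactly the $(+1)$-eigenspace of $\tau$ in $W_4$. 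The whole argument therefore reduces to diagonalising $\tau$ on $W_4$ and counting dimensions.

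First I would record that, by Theorem~\ref{t:molienser0}, $\dim W_4=5$ and that $I_{4}^{(0,0)},I_{4}^{(1,1)},I_{4}^{(2,2)},I_{4}^{(3,3)},I_{4}^{\{0,2\}}$ form a basis of $W_4$, as already chosen above. Next I would read off the $\tau$-action on this basis from the relations~\eqref{eq:I4_rel_3}: since $I_{4}^{(j,j)}=I_{4}^{[j,j]}$ (as $I_{4}^{\{j,j\}}=0$ by antisymmetry), the four diagonal invariants satisfy $\tau I_{4}^{(j,j)}=I_{4}^{(j,j)}$ and hence lie in the $(+1)$-eigenspace, while $\tau I_{4}^{\{0,2\}}=-I_{4}^{\{0,2\}}$ places that invariant in the $(-1)$-eigenspace. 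Thus $\tau$ is already diagonal in the chosen basis, with the $(+1)$-eigenspace spanned by the four diagonal invariants and the $(-1)$-eigenspace spanned by $I_{4}^{\{0,2\}}$.

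The conclusion then follows by a dimension count. The four diagonal invariants $I_{4}^{(0,0)},I_{4}^{(1,1)},I_{4}^{(2,2)},I_{4}^{(3,3)}$ are $\mathbf{\Gamma}_0$-invariant by construction (via Lemma~\ref{l:phimap} and~\eqref{e:I4def}) and $\tau$-invariant by the above, hence $\mathbf{\Gamma}$-invariant; moreover they are linearly independent, being four members of a basis of $W_4$. They therefore span a $4$-dimensional subspace of the space of degree-$4$ $\mathbf{\Gamma}$-invariants, which by Theorem~\ref{t:molienser} has dimension exactly $4$ (the coefficient of $t^4$ in $\PP^2_{\mathbf\Gamma}$). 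Equality of dimensions forces these four invariants to span the entire degree-$4$ $\mathbf{\Gamma}$-invariant space, which is the assertion; the single omitted generator $I_{4}^{\{0,2\}}$ is precisely the $\tau$-anti-invariant that drops out, accounting for the decrease from $5$ to $4$ between $\PP^2_{\mathbf{\Gamma}_0}$ and $\PP^2_{\mathbf{\Gamma}}$ at degree $4$.

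The genuinely substantive input is not this counting step but the $\tau$-transformation law~\eqref{eq:I4_rel_3}, which is what guarantees that exactly one of the five $\mathbf{\Gamma}_0$-basis invariants fails to be $\tau$-invariant. Establishing that law (carried out in \ref{app:quartic}) is the real obstacle, since it requires tracking how the transposition action of $\tau$ on $S$ interacts with the $3$-$j$ symbol contractions defining $U^{(j,j')}$ and $I_{4}^{(j,j')}$ in~\eqref{e:Udef}--\eqref{e:I4def}, in particular the interchange of primed and unprimed summation indices that swaps the roles of $j$ and $j'$ and hence sends $I_{4}^{(j,j')}$ to $I_{4}^{(j',j)}$. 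Given that law, the proposition is immediate.
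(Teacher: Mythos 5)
Your proposal is correct and follows essentially the same route as the paper: it uses the five invariants $I_{4}^{(0,0)},I_{4}^{(1,1)},I_{4}^{(2,2)},I_{4}^{(3,3)},I_{4}^{\{0,2\}}$ as a basis of the degree-$4$ $\mathbf{\Gamma}_0$-invariants (via Theorem~\ref{t:molienser0} and the identities of \ref{app:quartic}), applies the transposition law~\eqref{eq:I4_rel_3} to single out $I_{4}^{\{0,2\}}$ as the unique $\tau$-anti-invariant basis element, and closes with the dimension count from Theorem~\ref{t:molienser}. Your explicit $(\pm1)$-eigenspace formulation merely makes rigorous the paper's slightly informal ``consistent with the Molien series'' step, so the two arguments coincide in substance.
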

\section*{Acknowledgements} The authors are indebted to Tim Sluckin for instructive conversations on liquid crystals.  
DC is grateful to the Leverhulme Trust for research support via an Emeritus Fellowship, and to BCAM, Bilbao for support and excellent working conditions.
ST thanks the Mathematical Sciences Department, University of Southampton, for hospitality, while DC and RL thank the Southampton/Hamburg Exchange Scheme for supporting several research visits in both directions.  The authors are also grateful to the Isaac Newton Institute, Cambridge, where part of this work was carried out during the Programme on the Mathematics of Liquid Crystals in 2013. 

\section*{References}
\bibliographystyle{unsrt}
\bibliography{references}

\appendix
\section{Molien function details}
\label{app:molien_details}
The numerator $P_0(t)$ of the rational Molien function in Theorem~\ref{t:molien1} is:
\begin{align}
P_0(t)&=
1 + t^{4}
+ 3 t^{5}
+ 11 t^{6}
+ 16 t^{7}
+ 42 t^{8}
+ 80 t^{9}
+ 185 t^{10}
+ 357 t^{11}
+ 752 t^{12}\notag \\
&+ 1412 t^{13}
+ 2723 t^{14}
+ 4937 t^{15}
+ 8888 t^{16}
+ 15 342 t^{17}
+ 26 146 t^{18}
+ 43 083 t^{19}\notag \\
&+ 69 884 t^{20}
+ 110 398 t^{21}
+ 171 406 t^{22}
+ 260 288 t^{23}
+ 388 723 t^{24}
+ 569 210 t^{25}\notag \\
&+ 820 356 t^{26}
+ 1 161 726 t^{27}
+ 1 620 330 t^{28}
+ 2 224 150 t^{29}
+ 3 009 500 t^{30}\notag \\
&+ 4 012 238 t^{31}
+ 5 276 926 t^{32}
+ 6 845 013 t^{33}
+ 8 764 870 t^{34}
+ 11 078 260 t^{35}\notag \\
&+ 13 830 477 t^{36}
+ 17 054 459 t^{37}
+ 20 782 913 t^{38}
+ 25 029 615 t^{39}\notag \\
&+ 29 802 829 t^{40}
+ 35 086 893 t^{41}
+ 40 855 850 t^{42}
+ 47 055 721 t^{43}
+ 53 620 919 t^{44}\notag \\
&+ 60 456 820 t^{45}
+ 67 458 001 t^{46}
+ 74 494 882 t^{47}
+ 81 431 353 t^{48}
+ 88 115 150 t^{49}\notag \\
&+ 94 396 925 t^{50}
+ 100 121 953 t^{51}
+ 105 148 447 t^{52}
+ 109 343 460 t^{53}
+ 112 595 858 t^{54}\notag \\
&+ 114 815 204 t^{55}
+ 115 941 062 t^{56}
+ 115 941 062 t^{57}
+ 114 815 204 t^{58}
+ 112 595 858 t^{59}\notag \\
&+ 109 343 460 t^{60}
+ 105 148 447 t^{61}
+ 100 121 953 t^{62}
+ 94 396 925 t^{63}
+ 88 115 150 t^{64}\notag \\
&+ 81 431 353 t^{65}
+ 74 494 882 t^{66}
+ 67 458 001 t^{67}
+ 60 456 820 t^{68}
+ 53 620 919 t^{69}\notag \\
&+ 47 055 721 t^{70}
+ 40 855 850 t^{71}
+ 35 086 893 t^{72}
+ 29 802 829 t^{73}
+ 25 029 615 t^{74}\notag \\
&+ 20 782 913 t^{75}
+ 17 054 459 t^{76}
+ 13 830 477 t^{77}
+ 11 078 260 t^{78}
+ 8 764 870 t^{79}\notag \\
&+ 6 845 013 t^{80}
+ 5 276 926 t^{81}
+ 4 012 238 t^{82}
+ 3 009 500 t^{83}
+ 2 224 150 t^{84}\notag \\
&+ 1 620 330 t^{85}
+ 1 161 726 t^{86}
+ 820 356 t^{87}
+ 569 210 t^{88}
+ 388 723 t^{89}
+ 260 288 t^{90}\notag \\
&+ 171 406 t^{91}
+ 110 398 t^{92}
+ 69 884 t^{93}
+ 43 083 t^{94}
+ 26 146 t^{95}
+ 15 342 t^{96}
+ 8888 t^{97}\notag \\
&+ 4937 t^{98}
+ 2723 t^{99}
+ 1412 t^{100}
+ 752 t^{101}
+ 357 t^{102}
+ 185 t^{103}
+ 80 t^{104}
+ 42 t^{105}\notag \\
&+ 16 t^{106}
+ 11 t^{107}
+ 3 t^{108}
+ t^{109}
+ t^{113}.
\end{align}
The numerator $P_1(t)$ of the rational Molien function in~(\ref{e:p1q1}) is:
\begin{align}
P_1(t) & =   
1 - t^4 + 3 t^5 + 3 t^6 + 4 t^7 + 10 t^8 + 12 t^9 + 17 t^{10} + 25 t^{11}+ 30 t^{12}+ 36 t^{13} \notag \\
& + 41 t^{14}+ 41 t^{15} + 40 t^{16}+ 36 t^{17}+ 16 t^{18}- 9 t^{19}- 32 t^{20} - 74 t^{21}- 122 t^{22} \notag \\
& - 168 t^{23}- 223 t^{24}- 266 t^{25} - 298 t^{26}- 324 t^{27}- 312 t^{28}- 274 t^{29}- 216 t^{30} \notag\\
&- 108 t^{31}+ 30 t^{32}+ 183 t^{33}+ 364 t^{34} + 546 t^{35}+ 717 t^{36}+ 871 t^{37}+ 961 t^{38} \notag \\
& + 999 t^{39}+ 979 t^{40} + 859 t^{41}+ 670 t^{42}+ 413 t^{43} + 83 t^{44}- 268 t^{45}- 639 t^{46} \notag \\
& - 1002 t^{47}- 1299 t^{48}- 1536 t^{49}- 1683 t^{50} - 1695 t^{51}- 1601 t^{52}- 1398 t^{53} \notag \\ 
& - 1072 t^{54} - 680 t^{55}- 238 t^{56}+ 238 t^{57}+ 680 t^{58}+ 1072 t^{59}+ 1398 t^{60} + 1601 t^{61} \notag\\
& + 1695 t^{62}+ 1683 t^{63}+ 1536 t^{64}+ 1299 t^{65}+ 1002 t^{66}+ 639 t^{67}+ 268 t^{68} - 83 t^{69} \notag\\
& - 413 t^{70} -670 t^{71}- 859 t^{72}- 979 t^{73}- 999 t^{74}- 961 t^{75}- 871 t^{76}- 717 t^{77} \notag\\
& - 546 t^{78}- 364 t^{79}-183 t^{80} - 30 t^{81}+ 108 t^{82}+ 216 t^{83}+ 274 t^{84}+ 312 t^{85} \notag\\
& + 324 t^{86}+ 298 t^{87}+ 266 t^{88}+ 223 t^{89}+ 168 t^{90} + 122 t^{91}+ 74 t^{92}+ 32 t^{93} \notag\\
& + 9 t^{94}- 16 t^{95}- 36 t^{96}- 40 t^{97}- 41 t^{98}- 41 t^{99}- 36 t^{100} - 30 t^{101}- 25 t^{102} \notag\\
& - 17 t^{103}- 12 t^{104}- 10 t^{105}- 4 t^{106}- 3 t^{107}- 3 t^{108}+ t^{109}- t^{113}.
\end{align}
\section{Quartic invariants}
\label{app:quartic}
To understand the various relations among the invariant quantities, it is helpful to study the symmetries of the $U^{(j,j')}_{k k'}$ as defined in~(\ref{e:Udef}) which arise as a consequence of the symmetries of the $3j$-symbols. In particular, we make use of the following identities~\cite{Edmonds}
\begin{align}
\threej{j_1 & j_2 & j_3}{m_1 & m_2 & m_3} & = \threej{j_2 & j_3 & j_1}{m_2 & m_3 & m_1} 
= \threej{j_3 & j_1 & j_2}{m_3 & m_1 & m_2} \label{eq:3jsymm_1}\\
\threej{j_1 & j_2 & j_3}{m_1 & m_2 & m_3} & 
= (-1)^{j_1+j_2+j_3} \threej{j_2 & j_1 & j_3}{m_2 & m_1 & m_3} \notag \\
& = (-1)^{j_1+j_2+j_3} \threej{j_1 & j_3 & j_2}{m_1 & m_3 & m_2} \label{eq:3jsymm_2}\\
\threej{j_1 & j_2 & j_3}{-m_1 & -m_2 & -m_3} & 
= (-1)^{j_1+j_2+j_3} \threej{j_1 & j_2 & j_3}{m_1 & m_2 & m_3} \, .
\label{eq:3jsymm_3}
\end{align}
We now show that the relations among the invariants as given in Eqs.\eqref{eq:I4_rel_1}--\eqref{eq:I4_rel_3} are simple consequences of the symmetries \eqref{eq:3jsymm_1}--\eqref{eq:3jsymm_3}:
\begin{enumerate}
\item 
\begin{align}
U^{(j,j')}_{k k'} & = 
(-1)^{j+j'}\sum_{\substack{|m_i|\leq j \\ |m_i'|\leq j'}} 
\threej{2 & 2 & j}{m_2 & m_1 & k} \threej{2 & 2 & j'}{m_2' & m_1' & k'} 
S_{m_2 m_2'} S_{m_1 m_1'} \notag \\
& = (-1)^{j+j'} U^{(j,j')}_{k k'} .
\end{align}
Therefore, $U^{(j,j')}_{k k'} = 0$ if $j+j'$ is odd, and so is the associated invariant.
\item 
\begin{align}
U^{(j,j')}_{k k'} & 
= (-1)^{j+j'}\sum_{\substack{|m_i|\leq j \\ |m_i'|\leq j'}} 
\threej{2 & 2 & j}{-m_1 & -m_2 & -k} \threej{2 & 2 & j'}{-m_1' & -m_2' & -k'}
S_{m_1 m_1'} S_{m_2 m_2'} \notag \\
& = (-1)^{j+j'}\sum_{\substack{|m_i|\leq j \\ |m_i'|\leq j'}} 
\threej{2 & 2 & j}{-m_1 & -m_2 & -k} \threej{2 & 2 & j'}{-m_1' & -m_2' & -k'} \times \notag \\
& \qquad \times (-1)^{m_1+m_1'+m_2+m_2'} S^*_{-m_1 -m_1'} S^*_{-m_2 -m_2'} \notag \\
& = (-1)^{j+j'+k+k'}\sum_{\substack{|m_i|\leq j \\ |m_i'|\leq j'}} 
\threej{2 & 2 & j}{-m_1 & -m_2 & -k} \threej{2 & 2 & j'}{-m_1' & -m_2' & -k'} \times \notag \\
& \qquad \times S^*_{-m_1 -m_1'} S^*_{-m_2 -m_2'} \notag \\
& = (-1)^{j+j'+k+k'} \big( U^{(j,j')}_{-k, -k'} \big)^{*} \, .
\end{align}
Since we know  from (i) that $j+j'$ must be even for non-vanishing $Us$, we are left with
\begin{equation}\big( U^{(j,j')}_{k k'}\big)^{*} = (-1)^{k+k'} U^{(j,j')}_{-k, -k'} \, .\end{equation}  
This has the immediate consequence from~\eqref{e:I4def} that the invariants $I_{4}^{(j,j')}$ are all real.
\item Finally, let us study how the $U^{(j,j')}_{k k'}$ transform under transposition,
\begin{align}
\hspace{-1cm}\big( \tau U^{(j,j')}\big)_{k k'} & = \sum_{\substack{|m_i|\leq j \\ |m_i'|\leq j'}} 
\threej{2 & 2 & j}{m_1 & m_2 & k} \threej{2 & 2 & j'}{m_1' & m_2' & k'}
\times \notag \\
& \qquad \times (-1)^{m_1+m_1'+m_2+m_2'} S_{-m_1', -m_1} S_{-m_2',-m_2} \notag \\
& = (-1)^{k+k'} \sum_{\substack{|m_i|\leq j \\ |m_i'|\leq j'}} 
\threej{2 & 2 & j'}{-m_1' & -m_2' & k'} \threej{2 & 2 & j}{-m_1 & -m_2 & k} S_{m_1' m_1} S_{m_2' m_2} \notag \\ 
& = (-1)^{k+k'+j+j'} \sum_{\substack{|m_i|\leq j \\ |m_i'|\leq j'}} 
\threej{2 & 2 & j'}{m_1' & m_2' & -k'} \threej{2 & 2 & j}{m_1 & m_2 & -k}
 S_{m_1' m_1} S_{m_2' m_2} \notag \\ 
& = (-1)^{k+k'+j+j'} U^{(j',j)}_{-k',-k} \, .
\end{align}
From this we obtain
\begin{align}
\tau I_{4}^{(j,j')} = \sum_{\substack{|k|\leq j \\ |k'|\leq j'}} (-1)^{k+k'} 
U^{(j',j)}_{-k',-k} U^{(j',j)}_{k',k}
= I_{4}^{(j',j)} \, .
\label{eq:tau_invariants}
\end{align}
Therefore, the ``diagonal'' invariants ($j=j'$) are left unchanged by the transposition. The ``non-diagonal'' invariants ($j \neq j'$), in principle, are not. In terms of the symmetric and anti-symmetric parts of $I_{4}^{(j,j')}$, Eq.\eqref{eq:tau_invariants} yields
\begin{equation}\tau I_{4}^{[j,j']} = I_{4}^{[j,j']} \, , \qquad 
\tau I_{4}^{\{j,j'\}} = -I_{4}^{\{j,j'\}} \, .\end{equation}
\end{enumerate}
\end{document}